\newcommand*\rot[2]{\multirow{#1}{*}{\rotatebox[origin=c]{90}{\parbox[c]{1.2cm}{\centering #2}}}}
\newcommand*\rotlong[2]{\multirow{#1}{*}{\rotatebox[origin=c]{90}{\parbox[c]{4cm}{\centering #2}}}}
\newcommand {\R} {\mathbb{R}\xspace}
\newcommand{\Exp}{\mathbb{E}\xspace}
\newtheorem*{conjecture*}{Conjecture}
\newcommand{\mkmcal}[1]{\ensuremath{\mathcal{#1}}\xspace}
\newcommand{\C}{\mkmcal{C}}
\newcommand{\A}{\mkmcal{A}}
\newcommand{\I}{\mkmcal{I}}
\newcommand{\D}{\mkmcal{D}}
\newcommand{\X}{\ensuremath{\chi}}
\newcommand{\B}{\mkmcal{B}}
\newcommand{\T}{\mkmcal{T}}
\newcommand{\Tr}{\ensuremath{\mathfrak{T}}\xspace}
\newcommand{\RR}{\mkmcal{R}}
\renewcommand{\S}{\mkmcal{S}}
\renewcommand{\L}{\mkmcal{L}}
\newcommand{\E}{\mkmcal{E}}
\newcommand{\NN}{\mkmcal{N}}
\newcommand{\VD}{\mkmcal{VD}}
\newcommand{\sizeA}{\ensuremath{|\A |}}
\newcommand{\sizeB}{\ensuremath{|\B |}}
\newcommand{\upto}[2]{\ensuremath{\{#1,\dots,#2\}}}
\newcommand{\Rel}{\ensuremath{\mathit{Rel}}\xspace}
\newcommand{\querytime}{\ensuremath{Q}\xspace}
\newcommand{\spacecomplexity}{\ensuremath{S}\xspace}
\newcommand{\etal}{et al.\xspace}
\newcommand{\Matousek}{Matou{\v s}ek\xspace}
\newcommand{\eps}{\ensuremath{\varepsilon}\xspace}
\newcommand{\SPM}{\ensuremath{\mathit{SPM}}\xspace}
\newcommand{\Xis}{\ensuremath{\Xi}\xspace}
\newcommand{\Xit}{\ensuremath{\Xi_\Delta}\xspace}
\newcommand{\storage}{\ensuremath{S}}
\newcommand{\query}{\ensuremath{Q}}
\newcommand{\geod}{\ensuremath{\pi}\xspace}
\newcommand{\TwoPointShortestPath}{\textsc{Two-Point-Shortest-Path}\xspace}
\newcommand{\LowerEnvelope}{\textsc{Lower Envelope}\xspace}
\title{Towards Space Efficient Two-Point Shortest Path Queries in a Polygonal Domain}
\author{Sarita de Berg}{Department of Information and Computing Sciences, Utrecht University, The Netherlands}{s.deberg@uu.nl}{}{}
\author{Tillmann Miltzow}{Department of Information and Computing Sciences, Utrecht University, The Netherlands}{t.miltzow@uu.nl}{}{is generously supported by the Netherlands Organisation for Scientific Research (NWO) under project  no. VI.Vidi.213.150.}
\author{Frank Staals}{Department of Information and Computing Sciences, Utrecht University, The Netherlands}{f.staals@uu.nl}{}{}
\authorrunning{S. de Berg, T. Miltzow, and F. Staals} 
\keywords{data structure, polygonal domain, geodesic distance}  
\authorrunning{S. de Berg, T. Miltzow, and F. Staals}
\begin{document}

\maketitle

\begin{abstract}
We devise a data structure that can answer shortest path queries for two query points in a polygonal domain $P$ on $n$ vertices.  For any $\eps > 0$, the space complexity of the data structure is $O(n^{10+\eps})$ and queries can be answered in $O(\log n)$ time. Alternatively, we can achieve a space complexity of $O(n^{9+\eps})$ by relaxing the query time to $O(\log^2 n)$. This is the first improvement upon a conference paper by Chiang and Mitchell~\cite{chiang99two_point_euclid_short_path_queries_plane} from 1999.  They present a data structure with $O(n^{11})$ space complexity and $O(\log n)$ query time. Our main result can  be extended to include a space-time trade-off. Specifically, we devise data structures with $O(n^{9+\varepsilon}/\hspace{1pt} \ell^{4 + O(\eps)})$ space complexity and $O(\ell \log^2 n )$ query time, for any integer $1 \leq \ell \leq n$.

Furthermore, we present improved data structures for the special case where we restrict one (or both) of the query points to lie on the boundary of $P$. When one of the query points is restricted to lie on the boundary, and the other query point is unrestricted, the space complexity becomes $O(n^{6+\varepsilon})$ and the query time $O(\log^2n)$. When both query points are on the boundary, the space complexity is decreased further to $O(n^{4+\eps})$ and the query time to $O(\log n)$, thereby improving an earlier result of Bae and Okamoto.
\end{abstract}

\vspace{1cm}

\begin{figure}[tbp]
    \centering
        \includegraphics{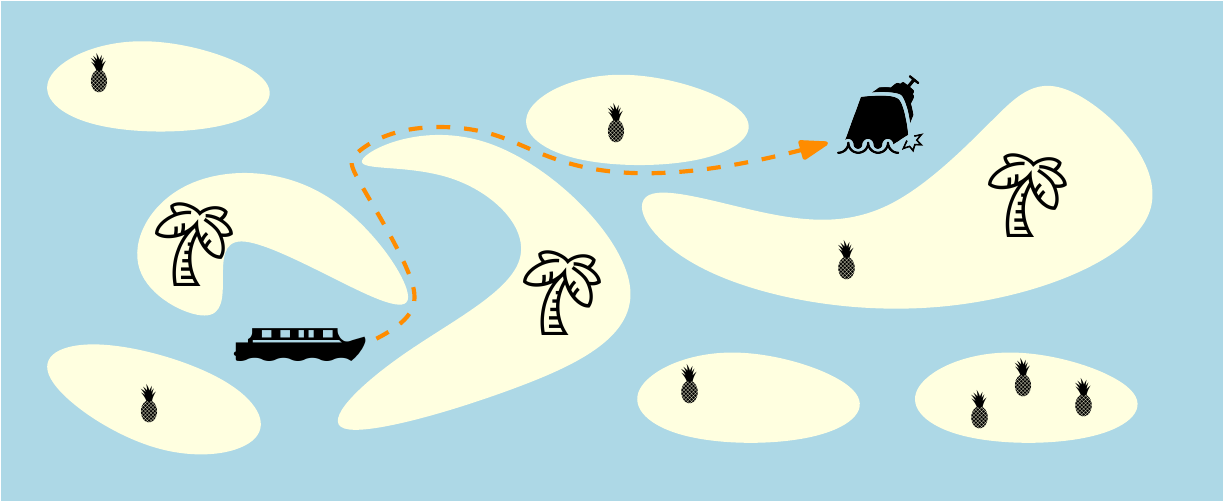}
    \caption{A tangible example of a two-point shortest path problem: finding the shortest path among islands for a boat to an emergency.}
    \label{fig:boat}
\end{figure}

\newpage
\setcounter{page}{1}
\section{Introduction}

In the \TwoPointShortestPath problem, we are given a polygonal
domain $P$ with $n$ vertices, possibly containing holes, and we wish to store $P$ so that given
two query points $s,t \in P$ we can compute their \emph{geodesic
  distance} $d(s,t)$, i.e.\ the length of a shortest path fully
contained in $P$, efficiently. After obtaining this distance, the
shortest path can generally be returned in $O(k)$ additional time,
where $k$ denotes the number of edges in the path. We therefore focus
on efficiently querying the distance $d(s,t)$. 

\begin{figure}[bt]
  \centering
  \includegraphics{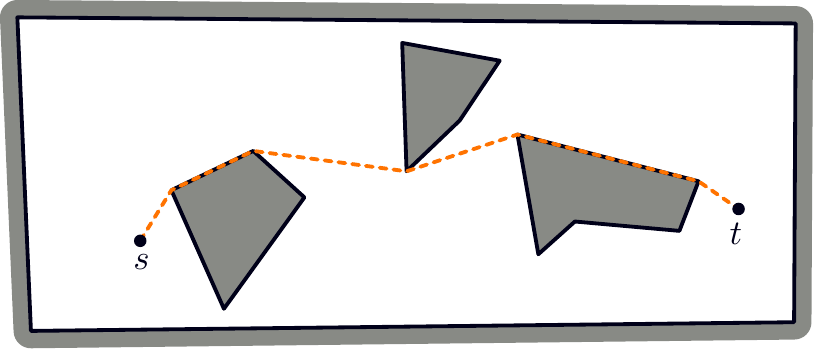}
  \caption{Given $P$ and the query points $s,t$ we want to compute the shortest path efficiently.}
  \label{fig:problem}
\end{figure}

\subparagraph{Tangible example.}
As an example of the relevance of the problem, consider a boat in the
sea surrounded by a number of islands, see Figure~\ref{fig:boat}. Finding the fastest route to an
emergency, such as a sinking boat, corresponds to finding the shortest
path among obstacles, i.e.\ in a polygonal domain.
This is just one of many examples where finding the shortest path in a polygonal domain
is a natural model of a real-life situation, which makes it an interesting problem to study.

\subparagraph{Motivation.}  The main motivation to study the \TwoPointShortestPath problem is that it is a very natural problem. It is
central in computational geometry, and forms a basis for many other
problems. The problem was solved optimally for simple polygons
(polygonal domains without holes) by Guibas and
Hershberger~\cite{ShortestPathSimplePolygon}, and turned out to be a
key ingredient to solve many other problems in simple polygons. A few
noteworthy examples are data structures for geodesic Voronoi
diagrams~\cite{oh19optim_algor_geodes_neares_voron}, farthest-point
Voronoi diagrams~\cite{wang21optim_deter_algor_geodes_farth},
$k$-nearest neighbor
searching~\cite{agarwal18improv_dynam_geodes_neares_neigh,berg21dynam_data_struc_neares_neigh_queries},
and more~\cite{eades20trajec_visib,KineticVoronoi}. In a
  polygonal domain, a two-point shortest path data structure is also
  the key subroutine in computing the geodesic
  diameter~\cite{Bae13_geodesic_diameter} or the geodesic
  center~\cite{Wang18_geodesic_center}.

\subsection{Related work}
\label{sub:Related-Work}
Chiang and
Mitchell~\cite{chiang99two_point_euclid_short_path_queries_plane}
announced a data structure for the \TwoPointShortestPath problem in
polygonal domains at SODA 1999. They use $O(n^{11})$ space and achieve
a query time of $O(\log n)$. They also present another data structure
that uses ``only'' $O(n^{10}\log n)$ space, but $O(\log^2n)$ query
time. However, the paper refers to the full version for some of the details of these data structures, which never appeared.
Since then, there have been no improvements on the two-point
shortest path problem in its general form. Instead, related and
restricted versions were considered. We briefly discuss the most
relevant ones. Table~\ref{tab:related-work} gives an overview of the related results.

\renewcommand{\arraystretch}{1.4}
\begin{table}[tbp]
    \centering
    \begin{tabular}{clllllp{2.9cm}}
      \toprule
         &  Year
         & Paper
         & Space
         & Preprocessing
         & Query
         & Comments  \\
      \midrule \rotlong{8}{two point shortest path}
        & 1989 & \cite{ShortestPathSimplePolygon}
         & $n$
         & $n$
         & $\log n$
         & simple polygon  \\

         & 1999 & \cite{chiang99two_point_euclid_short_path_queries_plane}
         &  $n^{11}$
         & $n^{11}$
         & $\log n$
         & \\
         & 1999 & \cite{chiang99two_point_euclid_short_path_queries_plane}
         &  $n^{10}\log n$
         & $n^{10}\log n$
         & $\log^2n$
         &  \\

         & 1999 & \cite{chiang99two_point_euclid_short_path_queries_plane}
         &  $n^{5 + 10\delta + \eps}$
         & $n^{5+10\delta + \eps}$
         & $n^{1-\delta}\log n$
         & $0 < \delta \leq 1$ \\
         
                 & 1999  &\cite{chiang99two_point_euclid_short_path_queries_plane}
         & $n + h^5$
         & ?
         & $h \log n$
         &   \\

         & 2001 & \cite{chen2001geometric}
         & $n^2$
         & $n^2 \log n$
         & $q \log n $
         & $q = O(n)$ \\
         
                 & 2008 &\cite{guo08short_path_queries_polyg_domain}
         & $n^2$
         & $n^2 \log n$
         & $h\log n$
         &  \\

        & 2012  &\cite{bae12query}
         & $n^4\lambda_{66}(n)$
         & $n^4 \lambda_{65}(n) \log n$
         & $\log n$
         & query points on $\partial P$  \\\hline

         \rot{3}{single source} & 1993 & \cite{mitchell1993shortest}
         & $n$
         & $n^{5/3}$
         & $\log n$
         &
         \\

        & 1999 & \cite{hershberger99optim_algor_euclid_short_paths_plane}
        &  $n$
        & $n\log n$
        &  $\log n$
        & \\

        & 2021 & \cite{wang21short_paths_among_obstac_plane_revis}
         & $n$
         & $n\log n$
         & $\log n $
         & linear working space
         \\\hline

        \rot{2}{approx- imation} & 1995 & \cite{chen1995all}
         & $\frac{n}{\varepsilon} + n\log n$
         & $o(f^{3/2}) + \frac{n\log n}{\eps}$
         & $\frac{\log n}{\varepsilon} + \frac{1}{\varepsilon}$
         & Ratio $(6 + \varepsilon)$  \\

        & 2007 & \cite{thorup07compac_oracl_approx_distan_aroun_obstac_plane}
         & $\frac{n\log n}{\varepsilon} $
         & $\frac{n\log^3n}{\varepsilon^2} $
         & $\frac{1}{\varepsilon^3} + \frac{\log n}{\eps \log\log n}$
         & Ratio $(1+\varepsilon)$\\\hline

       \rot{2}{$L_1$- metric}& 2000 & \cite{chen2000shortest}
         & $n^2 \log n$
         & $n^2 \log^2n$
         & $\log^2 n$
         &  \\

       & 2020 & \cite{wang20divid_conquer_algor_two_point}
         & $n + \frac{h^2 \log^3 h}{ \log \log h}$
         & $n + \frac{h^2 \log^4 h}{ \log \log h}$
         & $\log n$
         &   \\

      \bottomrule
    \end{tabular}
    \caption{Overview of results on shortest paths in polygonal domains.
    All bounds are asymptotic.
    The parameter $h$ represents the number of holes.
    The parameter $q$ is the minimum of the number of vertices that
    $s$ or $t$ sees.
    The parameter $f$ is the minimum number of faces needed to cover the vertices in a certain planar graph.
    The function $\lambda_m(n)$ is the maximum length of a Davenport-Schinzel sequence of $n$ symbols of order $m$.  
  }
    \label{tab:related-work}
\end{table}

As mentioned before, when the domain is restricted to a simple polygon, there exists an optimal linear size-data
structure with $O(\log n)$ query time by Guibas and Hershberger~\cite{ShortestPathSimplePolygon}.

When we consider the algorithmic question of finding the shortest path
between two (fixed) points in a polygonal domain, the state-of-the-art algorithms
build the so-called shortest path map from the source
$s$~\cite{guibas1987linear, hershberger1994computing}.
Hershberger and Suri presented
such an $O(n)$-space data structure that can answer shortest path queries from a fixed point $s$ in $O(\log n)$ time~\cite{hershberger99optim_algor_euclid_short_paths_plane}.
The construction takes $O(n\log n)$ time and space.
This was recently improved by
Wang~\cite{Wang23euclid} to run in
optimal $O(n + h\log h)$ time and to use only $O(n)$
working space when a triangulation is given, where $h$ denotes the number of holes in the domain.

By parameterizing the query time by the number of holes $h$, Guo, Maheshwari, and  Sack~\cite{guo08short_path_queries_polyg_domain} manage to build a data structure that uses $O(n^2)$ space and has query time~$O(h \log n)$.

Bae and Okamoto~\cite{bae12query} study the special case where both query points are restricted to lie on the boundary $\partial P$ of the polygonal domain.
They present a data structure of size $O(n^4 \lambda_{66}(n)) \approx O(n^5)$ that can answer queries in $O(\log n)$ time.
 Here, $\lambda_m(n)$ denotes the maximum length of a Davenport–Schinzel sequence of order $m$
 on $n$ symbols~\cite{Davenport_schinzel_sequences}.

 Two other variants that were considered are using
 approximation~\cite{chen1995all,thorup07compac_oracl_approx_distan_aroun_obstac_plane},
 and using the $L_1$-norm~\cite{ chen2016two,
   chen2000shortest,wang20divid_conquer_algor_two_point}. Very
   recently, Hagedoorn and Polishchuk~\cite{hagedoorn23_link_distance}
   considered two-point shortest path queries with respect to the
   link-distance, i.e.\ the number of edges in the path. This seems to
   make the problem harder rather than easier: the space usage of the
   data structure is polynomial, and likely much larger than the
   geodesic distance data structures, but they do not provide an exact
   bound.

\subsection{Results}  
Our main result is the first improvement in more
than two decades that achieves optimal $O(\log n)$ query time.

\begin{restatable}[Main Theorem]{theorem}{main}
  \label{thm:2pt_shortest_path_query_data_structure}
  For any constant $\eps>0$, we can build a data structure solving the \TwoPointShortestPath problem using  $O(n^{10+\eps})$ space and expected preprocessing time that has
  $O(\log n)$ query time.
  Alternatively, we can build a data structure using $O(n^{9+\eps})$ space and expected preprocessing time that has $O(\log^2 n)$ query time.
\end{restatable}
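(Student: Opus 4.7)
The plan is to reduce the problem to point-location in a $4$-dimensional subdivision of $P \times P$, obtained as a minimization diagram indexed by the first and last reflex vertices on the geodesic. Concretely, let $V \subset P$ denote the set of reflex vertices. For query $(s,t)$: if $s$ and $t$ see each other then $d(s,t) = |st|$, which is detectable in $O(\log n)$ time via a visibility data structure; otherwise $\pi(s,t)$ passes through at least one reflex vertex, and letting $u, v \in V$ be the first and last such vertex on the path,
\[ d(s,t) = |su| + d(u,v) + |vt|. \]
The $O(n^2)$ table of pairwise geodesic distances $d(u,v)$ is built by running the Hershberger--Suri SSSP algorithm from each reflex vertex; the shortest path maps $\mathrm{SPM}_u$ from every reflex vertex $u$ occupy $O(n^2)$ space and support $O(\log n)$ evaluation of $d(u,\cdot)$. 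The core task is then to identify the optimal pair $(u,v) \in V \times V$ from $(s,t)$.

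\emph{The minimization diagram in 4D.} For each pair $(u,v) \in V \times V$ define
\[ f_{u,v}(s,t) = |su| + d(u,v) + |vt|, \]
an algebraic function of constant description complexity, and let $D_{u,v} \subset P \times P$ be the semi-algebraic region on which $s$ sees $u$ and $t$ sees $v$. Restricted to the $D_{u,v}$, the distance $d(s,t)$ equals the pointwise minimum of the $f_{u,v}$. I would build a vertical decomposition of this $4$-dimensional minimization diagram and preprocess it for standard $O(\log n)$-time point-location. The lever for beating the naive bound of $O((n^2)^{4+\eps})$ is the \emph{separability} $f_{u,v}(s,t) = g_u(s) + c_{u,v} + h_v(t)$: first compute $\phi_u(t) = \min_v\{c_{u,v} + h_v(t)\}$ as a $2$-variate lower envelope of $n$ functions (of complexity $O(n^{2+\eps})$ per $u$), then compute the $4$-variate lower envelope of the $n$ separable functions $g_u(s) + \phi_u(t)$. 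Overlaying with the visibility domains $D_{u,v}$ and vertically decomposing the result should yield the target $O(n^{10+\eps})$ space bound, with expected preprocessing time matching the space by standard randomized incremental construction of envelopes.

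\emph{Space--query trade-off and main obstacle.} For the $O(n^{9+\eps})$/$O(\log^2 n)$ variant I would use a two-level structure: an outer $2$-dimensional point-location on $s$, whose cells fix the combinatorial type of the starting portion of $\pi(s,\cdot)$; each outer cell stores an inner $2$-dimensional structure on $t$ that determines the optimal $(u,v)$. Two $2$-dimensional point locations cost $O(\log^2 n)$, and the split allows each inner structure to be a factor of $n$ smaller than its $4$-dimensional counterpart. The hard part will be the combinatorial accounting. A straight product of two full SPM-like $2$-dimensional subdivisions does not get below $O(n^{10+\eps})$, so saving a further factor of $n$ requires that the inner structure actually exploit the localization provided by the outer level (for instance, by absorbing the choice of first vertex $u$ into the outer cell). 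A second subtlety is that the visibility domains $D_{u,v}$ are themselves semi-algebraic sets of nontrivial total complexity, and their overlay with the minimization diagram must be bounded without losing a factor; I expect this interplay between separability, visibility, and Agarwal--Sharir-type envelope bounds to absorb most of the technical effort.
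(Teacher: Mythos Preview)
Your separability observation is correct but does less work than you believe. Writing $\phi_u(t)=\min_v\{d(u,v)+|vt|\}$ just recovers $\phi_u(t)=d(u,t)$, whose graph is the shortest path map of $u$ and therefore consists of $\Theta(n)$ algebraic pieces. So the ``$n$ separable functions'' $g_u(s)+\phi_u(t)$ are not $n$ constant-complexity surfaces but $\Theta(n^2)$ of them in total, exactly the count one starts with. (This is in fact the same reduction the paper uses under the name \emph{relevant pairs}: the optimal $(u,v)$ is a pair of apices from a single $\SPM(w)$, one per vertex $w$, hence $O(n)$ candidates once the regions containing $s$ and $t$ are fixed.) Separately, indexing by vertex pairs forces you to intersect with visibility polygons $D_{u,v}$ of linear complexity; the paper sidesteps this by indexing by SPM regions, so that $s\in S$ already certifies visibility to the apex $v_S$.

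The deeper gap is the jump from ``4D minimization diagram'' to ``$O(n^{10+\eps})$ point-location structure''. With $m=\Theta(n^2)$ algebraic pieces, Koltun's vertical decomposition in $\R^5$ costs $O(m^{6+\eps})=O(n^{12+\eps})$, not $O(n^{10+\eps})$. The lower envelope itself has complexity $O(m^{4+\eps})=O(n^{8+\eps})$, but the paper explicitly notes (Section~\ref{sec:concluding_remarks}) that storing a 4-variate envelope for $O(\log n)$ point location in space proportional to its complexity is open; you cannot assume it. What actually produces the bound is a two-level \emph{cutting tree}: a first cutting tree on the $O(n^2)$ SPM regions returns the regions containing $s$ as $O(1)$ canonical subsets; for each such subset of size $k$, a second cutting tree on the $O(kn)$ relevant regions for $t$ does the same; each leaf stores a \LowerEnvelope structure on only $O(\min\{k,n\})$ relevant pairs. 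A ``large levels / small levels'' accounting (Lemma~\ref{lem:recurrence}) then shows the structure occupies $O(n^{4+\eps}\,\storage(n))$ space, which is $O(n^{10+\eps})$ with Koltun's $\storage(n)=O(n^{6+\eps})$.

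Finally, the $O(n^{9+\eps})$ / $O(\log^2 n)$ variant does not come from splitting into two 2D point locations as you propose. It comes from keeping the same cutting-tree skeleton and replacing Koltun's leaf structure by the $O(m^{5+\eps})$-space, $O(\log^2 m)$-query vertical-ray-shooting structure of Agarwal, Aronov, Ezra, and Zahl (Lemma~\ref{lem:vertical_ray_shooting}); plugging $\storage(n)=O(n^{5+\eps})$ into $O(n^{4+\eps}\,\storage(n))$ gives $O(n^{9+\eps})$.
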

One of the main downsides of the two-point shortest path data structure
is the large space usage.
One strategy to mitigate the space usage is to allow for a larger
query time.
For instance, Chiang and Mitchell presented a myriad of different space-time trade-offs.
One of them being $O(n^{5+10\delta+\eps})$  space with $ O(n^{1-\delta}\log n)$ query time for $0 < \delta \leq 1$.
Our methods allow naturally for such a trade-off.
We summarize our findings in the following theorem.

\begin{restatable}{theorem}{tradeoff}
\label{thm:tradeoff}
    For any constant $\eps >0$ and integer $1 \leq \ell \leq n$, we can build a data structure for the \TwoPointShortestPath problem using $O(n^{9+\eps}/\ell^{4+O(\eps)})$
  space and expected preprocessing time that has $O(\ell \log^2 n)$ query time.
\end{restatable}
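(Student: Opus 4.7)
The plan is to obtain the tradeoff by replacing an outer level of the multi-level data structure underlying Theorem~\ref{thm:2pt_shortest_path_query_data_structure} with a space-efficient Matoušek-style partition-tree variant. Conceptually, the $O(n^{9+\eps})$-space, $O(\log^2 n)$-query structure decomposes into an outer structure that captures the combinatorial type of the shortest path with respect to $s$, composed with an inner structure of size $O(n^{4+\eps})$ that evaluates the (algebraic, constant-complexity) geodesic distance function when queried with $t$. The factor $\ell^{4+O(\eps)}$ in the advertised space savings is consistent with the outer structure being built over a $5$-dimensional configuration space, for which the classical partition-tree tradeoff converts a query-time overhead of $r$ into a space saving of $r^{d-1}=r^{4}$.

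\textbf{Step 1.} Apply Matoušek's space-time tradeoff with parameter $r = \ell$ to the outer level. This produces a coarser outer partition of size $O(n^{5+\eps}/\ell^{4+O(\eps)})$ in which each cell carries $O(\ell)$ candidate combinatorial types; the $O(\eps)$ in the exponent of $\ell$ absorbs the slack inherent in the partition-tree construction. Locating $s$ in the coarsened outer structure takes $O(\log n)$ time and returns the $O(\ell)$ candidates.

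\textbf{Step 2.} Use the inner structure of the original construction as-is, treating it as a black box: for each of the $O(\ell)$ candidates returned in Step~1, invoke the corresponding inner query with $t$ in $O(\log^2 n)$ time and output the minimum of the resulting $O(\ell)$ distance values. The total query time is $O(\ell \log^2 n)$, and the total space is $O(n^{5+\eps}/\ell^{4+O(\eps)}) \cdot O(n^{4+\eps}) = O(n^{9+\eps}/\ell^{4+O(\eps)})$. Matoušek's partition trees can be built in expected time proportional to their size, matching the preprocessing-time bound.

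The main obstacle is to verify that the outer level of the main data structure really is realized as a semi-algebraic partition of constant description complexity over a $5$-dimensional parameter space, so that Matoušek's tradeoff applies with exactly the $d-1=4$ exponent in the space saving. Equally important is the correctness check that coarsening does not lose the true shortest-path type: each coarse cell must still return a superset of the types that are relevant at $s$, and this superset must have cardinality $O(\ell)$. Assuming these structural properties of the construction behind Theorem~\ref{thm:2pt_shortest_path_query_data_structure}, the remainder of the argument is a routine composition of multi-level data structures.
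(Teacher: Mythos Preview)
Your structural hypothesis about the main data structure is incorrect, and the proposal rests entirely on it. The $O(n^{9+\eps})$ structure of Theorem~\ref{thm:2pt_shortest_path_query_data_structure} is \emph{not} an outer $5$-dimensional point-location structure composed with an $O(n^{4+\eps})$ inner evaluator. It is a two-level cutting tree over \emph{planar} Tarski cells (the augmented-\SPM regions in $\R^2$): the first level locates $s$ among the $O(n^2)$ regions of $\T$, the second level locates $t$ among the relevant regions, and at the leaves sit the \LowerEnvelope structures of Lemma~\ref{lem:vertical_ray_shooting}. The only $5$-dimensional object is this innermost lower-envelope/ray-shooting structure on $O(n)$ four-variate functions; the outer layers are $2$-dimensional and a Matou\v{s}ek tradeoff there would save a factor of $\ell^{1}$, not $\ell^{4}$. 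The ``main obstacle'' you flag is therefore fatal: the property you need to verify is simply false.

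The paper obtains the stated tradeoff by a much more elementary device that does not touch the multi-level machinery at all. Partition the vertex set of $P$ into $\ell$ groups $V_1,\dots,V_\ell$ of size $O(n/\ell)$, and let $\T_i\subseteq\T$ be the $O(n^2/\ell)$ regions coming from $\SPM(v)$ for $v\in V_i$. Build the full Lemma~\ref{lem:2pt_shortest_path_query_data_structure_general} structure separately on each $\T_i$. The key point is that within $\T_i$ there are only $O(n/\ell)$ relevant pairs for any $(s,t)$, so every \LowerEnvelope structure is now built on $m\leq n/\ell$ functions and costs $\storage(n/\ell)=(n/\ell)^{5+\eps}$ instead of $n^{5+\eps}$. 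The same space analysis as in Section~\ref{sec:A_two-point_shortest_path_data_structure} then gives $O(n^{4+\eps}\storage(n/\ell))$ per group and $O(\ell\, n^{4+\eps}(n/\ell)^{5+\eps})=O(n^{9+\eps}/\ell^{4+O(\eps)})$ in total; a query asks all $\ell$ structures and takes the minimum, costing $O(\ell\log^2 n)$. So the exponent~$4$ arises from $\storage(m)=m^{5+\eps}$ and the identity $\ell\cdot(n/\ell)^{5+\eps}=n^{5+\eps}/\ell^{4+\eps}$, not from any $(d-1)$-type partition-tree phenomenon.
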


For example, for $\ell = n^{3/4}/\log n$ we obtain an $O(n^{6 + \eps}\log^4n)$-size data structure with query time $O(n^{3/4}\log n)$, which improves the $O(n^{7.5 + \eps})$-size data structure with similar query time of~\cite{chiang99two_point_euclid_short_path_queries_plane}.

Another way to reduce the space usage is to restrict the problem setting.
With our techniques it is natural to consider the setting where either one or both of the query points
are restricted to lie on the boundary of the domain. In case we only restrict one of the query points to the boundary, we obtain the following result. Note that the other query point can lie anywhere in $P$.

\begin{restatable}{theorem}{boundary}
\label{thm:thm:2pt_boundary_shortest_path_query_data_structure}
  For any constant $\eps >0$, we can build a data structure for the \TwoPointShortestPath problem in $O(n^{6+\eps})$
  space and expected time that can answer queries for $s \in \partial P$ and $t \in P$ in $O(\log^2 n)$ time.
\end{restatable}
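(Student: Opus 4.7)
The plan is to express $d(s, t)$ as the pointwise minimum of $O(n^2)$ partial algebraic functions and build a two-level point-location structure that exploits the 1-dimensional parameterization of $s \in \partial P$.

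First I would precompute, in $O(n^2 \log n)$ time, the all-pairs vertex-to-vertex geodesic distances $D_G(u, v)$ via the visibility graph, the visibility polygon $B_v$ of each polygon vertex $v$ (complexity $O(n)$), and the set $A_u \subset \partial P$ of boundary points visible from each vertex $u$ (a union of $O(n)$ intervals). The structural identity
\[
d(s, t) = \min\Bigl(|st|\cdot \mathbf{1}[s \text{ sees } t],\ \min_{u, v \in V(P)} \bigl(|su| + D_G(u, v) + |vt|\bigr)\Bigr),
\]
where $u, v$ range over vertex pairs with $s$ visible to $u$ and $t$ visible to $v$, expresses $d$ as a lower envelope. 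Each function $f_{u, v}(s, t) := |su| + D_G(u, v) + |vt|$ has variable-separated algebraic form of bounded degree and validity domain $A_u \times B_v$ in the 3-dimensional parameter space $\partial P \times P$.

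I would organize the structure as a balanced binary tree on $\partial P$ whose leaves are the $O(n^2)$ maximal sub-intervals on which the set of visible vertices from $s$ is constant and the combinatorial order of the candidate path lengths is fixed. At each canonical node $\nu$, associated with an $s$-range $R_\nu$, store a planar point-location structure encoding the lower envelope in $\mathbb{R}^3$ of the partial surfaces $\{f_{u, v} : (s, t) \in R_\nu \times B_v, \ A_u \supseteq R_\nu\}$. By Sharir's $O(m^{2+\eps})$ bound on lower envelopes of bounded-degree semi-algebraic surfaces, applied to the $O(n^2)$ surface patches at each node after decomposition into pieces of bounded description complexity, each node's envelope has complexity $O(n^{4+\eps})$; summing over the canonical nodes and using standard hierarchical sharing yields a total of $O(n^{6+\eps})$ space. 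A query $(s, t)$ follows the $O(\log n)$ ancestors of the leaf containing $s$, performing a planar point-location for $t$ at each in $O(\log n)$ time and returning the minimum, giving $O(\log^2 n)$ total query time.

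The main obstacle will be proving that the per-node lower envelope complexity is indeed $O(n^{4+\eps})$ despite the interaction between the $t$-visibility domains $B_v$ (each of complexity $O(n)$) and the envelope itself; this requires a careful reduction of the partial algebraic surfaces to pieces of bounded description complexity so that Sharir's theorem applies cleanly, and a matching argument that the kinetic changes of the envelope as $s$ varies over the canonical range $R_\nu$ do not blow up the count. A secondary technical issue is incorporating the direct-visibility term $|st| \cdot \mathbf{1}[s \text{ sees } t]$, whose validity domain is a semi-algebraic subset of $R_\nu \times P$ that, by standard arguments, can be added to the envelope without changing the asymptotic bounds.
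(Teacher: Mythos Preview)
Your proposal has two genuine gaps that prevent it from reaching the claimed bounds.

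\textbf{Dimension of the envelope.} The functions $f_{u,v}(s,t)$ depend on three real parameters (one for $s\in\partial P$, two for $t\in P$), so their graphs are hypersurfaces in $\R^4$, not $\R^3$. Sharir's $O(m^{2+\eps})$ bound is for bivariate functions; the relevant bound for trivariate functions is $O(m^{3+\eps})$. Your workaround---storing a \emph{planar} point-location structure per node and handling the $s$-dependence ``kinetically''---does not help: the number of $s$-values at which the combinatorial structure of the bivariate envelope in $t$ changes is precisely the number of vertices of the full trivariate minimization diagram, which is $\Theta(m^{3+\eps})$. Hence your leaf condition ``the combinatorial order of the candidate path lengths is fixed'' cannot hold on only $O(n^2)$ intervals of $\partial P$; only the visibility condition gives that bound.

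\textbf{Function count and domain complexity.} At each node you have $\Theta(n^2)$ pairs $(u,v)$, and each validity domain $B_v$ has complexity $\Theta(n)$, so decomposing into pieces of bounded description complexity yields $\Theta(n^3)$ surface patches, not $O(n^2)$. You correctly flag this as ``the main obstacle'' but do not resolve it; with either envelope bound the space blows up well past $O(n^{6+\eps})$.

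The paper avoids both issues. It works with the constant-complexity regions of the augmented shortest path maps rather than with visibility polygons, so no $\Theta(n)$-complexity domains arise. Crucially, it uses the \emph{relevant-pairs} observation (Lemma~\ref{lem:size_rels}): if the $s$-regions stored at a node form a set $\A$ of size $k$, only $O(\min\{k,n\})$ region pairs can realize the minimum, not $\Theta(n^2)$. For $s$ it builds a high-fanout segment tree on the $O(n^2)$ boundary intervals (so $O(1)$ levels); at each segment-tree node it builds a cutting tree on the $t$-regions (Lemma~\ref{lem:restricted_subproblem_ds}), and at the bottom stores the \emph{trivariate} lower envelope of the $O(k)$ relevant functions in $O(k^{3+\eps})$ space with $O(\log^2 n)$ query time~\cite{agarwal97comput_envel_four_dimen_applic}. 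Summing $O(n^{2+\eps'}\min\{k_\mu,n\}^{3+\eps'})$ over segment-tree nodes and using $\sum_\mu k_\mu = O(n^{2+\delta})$ then gives the $O(n^{6+\eps})$ bound.
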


When both query points are restricted to the boundary, we obtain the following result.

\begin{restatable}{theorem}{boundaryst}
\label{thm:thm:2pt_boundary_st_shortest_path_query_data_structure}
  For any constant $\eps >0$, we can build a data structure for the \TwoPointShortestPath problem in $O(n^{4+\eps})$
  space and time that can answer queries for $s \in \partial P$ and $t \in \partial P$ in $O(\log n)$ time.
\end{restatable}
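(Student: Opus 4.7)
The plan is to express the restricted geodesic distance $d(s,t)$ on $\partial P\times\partial P$ as the pointwise lower envelope of $O(n^2)$ bivariate algebraic functions of constant description complexity, and to attach an optimal planar point-location structure to the resulting subdivision. Since $\partial P$ has intrinsic dimension one, the parameter space is two-dimensional, and I view it as a disjoint union of $O(n^2)$ edge-edge rectangles. For every ordered pair of reflex vertices $(u,v)$ of $P$ introduce the candidate cost
\[
 f_{u,v}(s,t)\;=\;|su|+d(u,v)+|vt|,
\]
defined on the validity region $R_{u,v}\subseteq \partial P\times\partial P$ of those $(s,t)$ for which $s$ sees $u$ and $t$ sees $v$. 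Supplement these with the two degenerate families $f_u(s,t)=|su|+|ut|$ (where both $s$ and $t$ see $u$) and $f_\emptyset(s,t)=|st|$ (where $s$ and $t$ see each other). A triangle-inequality argument shows that $f_{u,v}(s,t)\ge d(s,t)$ throughout $R_{u,v}$, with equality attained exactly when $u$ and $v$ are, respectively, the first and the last reflex vertices on an actual shortest $s\to t$ path; hence $d(s,t)$ is the pointwise minimum of these partial functions.

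First I would precompute all $O(n^2)$ pairwise geodesic distances $d(u,v)$ between reflex vertices (via Dijkstra on the visibility graph) together with the $O(n)$ visible sub-arcs of $\partial P$ from every reflex vertex. Then I would compute the lower envelope of the $O(n^2)$ partial candidate surfaces over $\partial P\times\partial P$. The central claim is that this envelope has complexity $O(n^{4+\eps})$, matching the Agarwal--Schwarzkopf--Sharir bound $O(m^{2+\eps})$ for the lower envelope of $m$ constant-description-complexity algebraic surfaces in $\mathbb{R}^3$ when $m=O(n^2)$. A randomized incremental construction yields the envelope in expected $O(n^{4+\eps})$ time. On top of the resulting $O(n^{4+\eps})$-complexity planar subdivision, any standard optimal planar point-location structure gives $O(\log n)$ query time and only linear additional space.

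The main obstacle is making the envelope-complexity bound rigorous despite the fact that each candidate surface is only partially defined: each validity region $R_{u,v}$ is a product of two unions of $O(n)$ visible sub-arcs of $\partial P$ and can therefore have description complexity up to $O(n^2)$, so a naive reduction to constant-complexity surface patches inflates the surface count by a factor of $n^2$ and ruins the target bound. I expect to avoid this blow-up by refining $\partial P$ in advance into the $O(n^2)$ maximal sub-arcs across which visibility from every reflex vertex is combinatorially constant, and then charging both the surface-surface intersection vertices and the domain-boundary vertices of the envelope by a single Clarkson--Shor style argument that exploits the separable form $|su|+|vt|+\text{const}$ of the candidates to reduce the analysis to a pair of one-dimensional subproblems. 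Once the $O(n^{4+\eps})$ envelope bound is in place, the preprocessing time and space, together with the $O(\log n)$ query time, follow from routine point-location machinery.
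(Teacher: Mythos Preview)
Your setup is correct: $d(s,t)$ on $\partial P\times\partial P$ is indeed the pointwise minimum of the $O(n^2)$ partial functions $f_{u,v}$, and if the minimization diagram had complexity $O(n^{4+\eps})$ and could be built in that time, point location would finish the job. The gap is precisely the step you flag but do not carry out. Each validity region $R_{u,v}=\mathrm{vis}(u)\times\mathrm{vis}(v)$ is a product of two unions of $\Theta(n)$ boundary arcs, so cutting every surface into constant-complexity patches produces $\Theta(n^{4})$ patches in total; the standard $O(m^{2+\eps})$ envelope bound then only gives $O(n^{8+\eps})$. Refining $\partial P$ into the $O(n^{2})$ visibility-constant sub-arcs does not help by itself: it fixes the active function set per cell but does not reduce the global patch count. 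Your appeal to a Clarkson--Shor argument exploiting the separable form $|su|+|vt|+\mathrm{const}$ is a hope, not a proof; separability does make the one-dimensional slices tame, but it does not obviously control the number of envelope vertices lying on the $\Theta(n^{4})$ domain-boundary segments, and you give no concrete charging scheme. Without that bound (and a matching construction algorithm), the $O(n^{4+\eps})$ claim is unsupported; indeed, the previous best direct-envelope result of Bae and Okamoto obtains only $O(n^{4}\lambda_{66}(n))\approx O(n^{5})$.

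The paper takes a completely different route that never bounds the global envelope. It works with the $O(n^{2})$ intervals $\I=\{T\cap\partial P: T\in\T\}$ coming from the augmented shortest path maps, and builds a segment tree of fan-out $n^{\delta}$ on $\I$ for $s$. At each node $\mu$ with canonical set $\I_\mu$ of size $k$, it builds a second segment tree for $t$, but only on the at most $O(nk)$ intervals that form a \emph{relevant pair} with some $S\in\I_\mu$ (i.e.\ lie in the same $\SPM(w)$). At every inner node of that tree, it stores the lower envelope of just $O(\min\{k,m_\mu\})$ bivariate functions, which costs $O(\min\{k,m_\mu\}^{2+\eps'})$ space. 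Summing, $\sum_\mu \min\{k,m_\mu\}^{2+\eps'}\le k^{1+\eps'}\sum_\mu m_\mu=O(k^{2+\eps'}n^{1+\delta})$, and a second summation over the outer tree gives $O(n^{4+\eps})$ total. Queries follow two $O(1)$-length root-to-leaf paths and do $O(1)$ planar point locations, for $O(\log n)$ time. The decisive ingredient you are missing is the relevant-pairs observation, which bounds the number of functions per node and thereby sidesteps the domain-complexity blow-up altogether.
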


Note that the running time for this problem is deterministic. This improves the result by Bae and Okamoto~\cite{bae12query}, who provide an $O(n^4 \lambda_{66}(n)) \approx O(n^5)$-sized structure for this problem.





To complement our positive algorithmic results, we also studied
lower bounds.
Unfortunately, we were only able to find a lower bound assuming that
the Integer version of Hopcroft's problem takes $\Omega(n^2)$ time 
to compute.
See \Cref{sec:lower_bounds} for a precise definition of Hopcroft's problem.

\begin{restatable}{theorem}{lowerbound}
\label{thm:lowerbound}
      Consider an data structure for the \TwoPointShortestPath problem with
    space complexity $\spacecomplexity$ and query time $\querytime$.
    Assuming the Integer Hopcroft lower bound, 
    if $\querytime$ is polylogarithmic then it holds that  
     $\spacecomplexity = \Omega(n^2)$.
\end{restatable}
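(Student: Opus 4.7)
The plan is to reduce Integer Hopcroft to a batch of two-point shortest-path queries in a polygonal domain on $O(n)$ vertices. Given $n$ points $p_1,\dots,p_n \in \Z^2$ and $n$ lines $\ell_1,\dots,\ell_n$ with all coordinates bounded in magnitude by some $U = n^{O(1)}$, the goal is to construct in $O(n)$ time a polygonal domain $P$ together with $n$ query pairs $(s_j,t_j)$ so that the answer to the $j$-th query reveals whether any $p_i$ lies on $\ell_j$.

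For the construction, I would take the outer boundary of $P$ to be a sufficiently large bounding box and carve out a tiny triangular hole $H_i$ of diameter $\delta$ centered at each input point $p_i$. The resulting domain has $4 + 3n = O(n)$ vertices. For each line $\ell_j$, pick two query points $s_j,t_j$ on $\ell_j$ lying far inside $P$ and outside every $H_i$. The shortest path from $s_j$ to $t_j$ in $P$ then equals the Euclidean segment $\overline{s_j t_j}$ if and only if $\ell_j$ avoids every hole; otherwise it must detour around some $H_i$ and is strictly longer. Thus a single shortest-path query per line suffices to decide Integer Hopcroft.

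The subtle point is to choose $\delta$ so that $\ell_j$ meets $H_i$ \emph{exactly when} $p_i \in \ell_j$. The integer assumption is what makes this possible: if $p_i \notin \ell_j$ and we write $\ell_j$ as $ax+by+c = 0$ with $a,b,c \in \Z$ of magnitude $O(U)$, then the Euclidean distance from $p_i$ to $\ell_j$ is $|a x_{p_i} + b y_{p_i}+c|/\sqrt{a^2+b^2} = \Omega(1/U)$, since the numerator is a positive integer and the denominator is $O(U)$. Any $\delta \ll 1/U$ therefore yields the desired equivalence, and since $\log U = O(\log n)$ such a $\delta$ only adds a constant factor to the bit complexity.

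Finally, to turn the reduction into the claimed lower bound, suppose a data structure with space $\spacecomplexity$ and query time $\querytime = \polylog(n)$ exists. In the standard word-RAM setting such a structure can be built in time polynomially bounded by $\spacecomplexity$, so Integer Hopcroft can be solved in time $\tilde{O}(\spacecomplexity) + n \cdot \polylog(n)$. By the assumed Integer Hopcroft lower bound this must be $\Omega(n^2)$, forcing $\spacecomplexity = \tilde{\Omega}(n^2)$, which up to the usual polylogarithmic slack is the claimed bound. The main obstacle is really this last step: the Hopcroft hypothesis is a statement about time, and turning it into a lower bound on $\spacecomplexity$ requires the mild assumption that preprocessing does not vastly exceed the storage. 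The geometric reduction and the integer-gap argument that drives it are otherwise routine.
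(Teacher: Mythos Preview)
Your geometric reduction is essentially identical to the paper's: a bounding box, a tiny obstacle at each input point, and a shortest-path query along each input line, with the integer gap guaranteeing that a non-incident point is far enough from the line that a sufficiently small obstacle is missed. The paper uses square holes and places $s_j,t_j$ where $\ell_j$ meets the outer boundary, while you use triangles and place them ``far inside $P$''; these are cosmetic differences.

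The one substantive discrepancy is in the final step. You treat the Integer Hopcroft Lower Bound as a \emph{time} lower bound and then need the auxiliary assumption that preprocessing time is polynomially bounded by storage to convert it into a space bound---hence your closing caveat about ``the main obstacle.'' In the paper, however, the conjecture is stated directly as a space--query trade-off: any data structure for Integer Hopcroft with polylogarithmic query time has $\spacecomplexity = \Omega(n^2)$. With that formulation the reduction transfers the space bound immediately, and your obstacle disappears. Your concern is legitimate if one starts from the more common time-based Hopcroft conjecture, but here the paper sidesteps it by choice of hypothesis.
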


\subsection{Discussion}
In this section, we briefly highlight strengths and limitations of
our results.

\subparagraph{Applications.} For many applications, our current data
structure is not yet efficient enough to improve the state of the
art. Yet, it is conceivable that further improvements to the two-point
shortest path data structure will trigger a cascade of improvements
for other problems in polygonal domains. One problem in which we do
already improve the state of the art is in computing the geodesic
diameter, that is, the largest possible (geodesic) distance between
any two points in $P$. Bae \etal~\cite{Bae13_geodesic_diameter} show
that with the two-point shortest path data structure of Chiang and
Mitchell, the diameter can be computed in $O(n^{7.73 + \eps})$
time. By applying our improved data structure (with $\ell = n^{2/5}$), the running time can
directly be improved to $O(n^{7.4+\eps})$. Another candidate
application is computing the geodesic center of $P$, i.e.\ a point that
minimizes the maximum distance to any point in
$P$. Wang~\cite{Wang18_geodesic_center} shows how to compute a center
in $O(n^{11} \log n)$ time using two-point shortest path
queries. Unfortunately, the two-point shortest path data structure is
not the bottleneck in the running time, so our new data structure does
not directly improve the result yet. Hence, more work is required
here.

\subparagraph{Challenges.} Data structures often use
divide-and-conquer strategies to efficiently answer queries. One of
the main challenges in answering shortest path queries is that it is
not clear how to employ such a divide and conquer strategy. We cannot
easily partition the domain into independent subpolygons (the strategy
used in simple polygons), as a shortest path may somewhat arbitrarily
cross the partition boundary. Furthermore, even though it suffices to find a
single vertex $v$ on the shortest path from $s$ to $t$ (we can then
use the shortest path map of $v$ to answer a query), it is hard to
structurally reduce the number of such candidate vertices. It is, for
example, easily possible that both query points see a linear number of
vertices of $P$, all of which produce a candidate shortest path of
almost the same length. Hence, moving $s$ or $t$ slightly may result
in switching to a completely different path. 

We tackle these
challenges by considering a set $\T$ of regions that come from the
triangulated shortest path maps of the vertices of $P$. The number of
regions in \T is a measure of the remaining complexity of the
problem. We can gradually reduce this number in a divide and conquer
scheme using cuttings. However, initially we now have $O(n^2)$ regions
as candidates to consider rather than just $n$ vertices. Surprisingly,
we show that we can actually combine this idea with a notion of
\emph{relevant pairs of regions}, of which there are only $O(n)$. This
then allows us to use additional tools to keep the query time and
space usage in check. It is this careful combination of these ideas
that allows us to improve the space bound of Chiang and
Mitchell~\cite{chiang99two_point_euclid_short_path_queries_plane}.

\subparagraph{Lower bounds.}
An important question is how much improvement of the space complexity is actually possible while retaining polylogarithmic query time. To the best of our knowledge there exist no non-trivial lower bounds on the space complexity of a two-point shortest path data structure. In \Cref{sec:lower_bounds}, we show a conditional quadratic lower bound. 
We conjecture that it may even be possible to obtain a super-quadratic lower bound. An indication for this is that the equivalence decomposition of $P$, which is the subdivision of $P$ into cells such that the shortest path maps are topologically equivalent, has complexity $\Omega(n^4)$~\cite{chiang99two_point_euclid_short_path_queries_plane}. However, this does not directly imply a lower bound on the space of any two-point shortest path data structure. We leave proving such a bound as an exciting open problem.

\subsection{Organization}
In Section~\ref{sec:overview}, we give an overview of our main data
structure, including the special case where one of the query points
lies on the boundary. In
Section~\ref{sec:Decomposing_the_Distance_Computation}, we show how to
decompose the distance computation, such that we can apply a
divide-and-conquer approach to the problem. Cuttings are a key
ingredient of our approach, we formally define them, and prove some
basic results about them in Section~\ref{sec:Cuttings}. In
Sections~\ref{sec:A_data_structure_for_when_Rs_and_Rt_are_given} to~\ref{sec:A_two-point_shortest_path_data_structure}, we build up our main data structure step-by-step. We first consider the subproblem where a subset of regions for both $s$ and $t$ is given in Section~\ref{sec:A_data_structure_for_when_Rs_and_Rt_are_given}, then we solve the subproblem where a subset of regions is given only for $s$ in Section~\ref{sec:given_Rs}, and finally we tackle the general two-point shortest path problem in Section~\ref{sec:A_two-point_shortest_path_data_structure}. In Section~\ref{sec:Space-_time_trade-off}, we generalize this result to allow for a space-time trade-off. In Section~\ref{sec:A_data_structure_for_$s$_on_the_boundary}, we discuss our results for the restricted problem where either one or both of the query points must lie on the boundary of $P$. Lastly, in Section~\ref{sec:lower_bounds}, we discuss the $\Omega(n^2)$ space lower bound based on the Integer Hopcroft lower bound.

\section{Proof Overview}\label{sec:overview}

\subparagraph*{Direct visibility.}
As a first step, we build the visibility complex as described by Pocchiola and Vegter~\cite{pocchiola96the_visib_compl}. 
It allows us to query in $O(\log n)$ time if $s$ and $t$ can see each other.
If so, the
line segment connecting them is the shortest path. 
The visibility complex uses
$O(n^2)$ space and can be built in $O(n^2)$ time.
So, in the remainder, we assume that $s$ and $t$ cannot see each other, hence
their shortest path will visit at least one vertex of $P$.

\begin{figure}[tbp]
  \centering
  \includegraphics{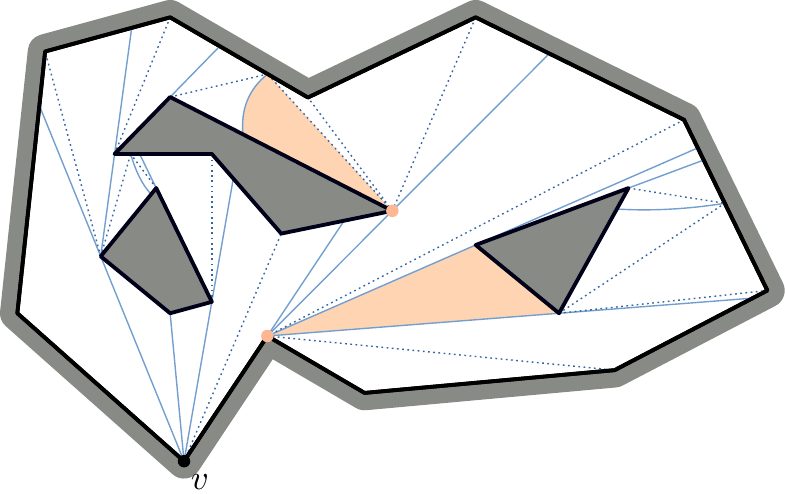}
  \caption{The augmented shortest path map of a vertex $v$. The shortest path map edges are solid, and the additional edges in the augmented shortest path map are dotted. Each region is bounded by three curves, of which at least two are line segments. Two regions and their apices are highlighted.}
  \label{fig:spm}
\end{figure}

\subparagraph*{Augmented shortest path maps.}
In our approach, we build a data structure on the regions provided by the \emph{augmented shortest path maps} of all vertices of $P$. The shortest path map of a point $p \in P$ is a partition of $P$ into maximal regions, such that for every point in a region $R$ the shortest path to $p$
traverses the same vertices of $P$~\cite{hershberger99optim_algor_euclid_short_paths_plane}. 
To obtain the augmented shortest path map $\SPM(p)$, we refine the shortest path map by connecting each boundary vertex of a region $R$ by a line segment with the apex $v_R$ of that region, i.e.\ the first vertex on the shortest path from any point in $R$ towards $p$.
See Figure~\ref{fig:spm} for an example. 
All regions in $\SPM(p)$ are ``almost'' triangles; they are bounded by three curves, two of which are line segments, and the remaining is either a line segment or a piece of a hyperbola.
The (augmented) shortest path map has complexity $O(n)$ and can be constructed in $O(n \log n)$ time~\cite{hershberger99optim_algor_euclid_short_paths_plane}. 
Let \T be the multi-set of \emph{all} augmented shortest path map regions of \emph{all} the vertices of $P$. 
As there are $n$ vertices in $P$, there are $O(n^2)$ regions in $\T$.

Because we are only interested in shortest paths that contain at least one vertex, the shortest path between two points $s,t \in P$ consists of an edge from $s$ to some vertex $v$ of $P$ that is visible from $s$, a shortest path from $v$ to a vertex $u$ (possibly equal to $v$) that is visible from $t$, and an edge from $u$ to $t$. For two regions $S,T \in \T$ with $s \in S$ and $t \in T$, we define
$f_{ST}(s,t) = ||sv_S|| + d(v_S,v_T) + ||v_Tt||$. 
The distance $d(s,t)$ between $s$ and $t$ is realized by this function when $v_S = v$ and $v_T = u$. 
As for any pair $S,T$ with $s \in S$ and $t \in T$ the function $f_{ST}(s,t)$ corresponds to the length of some path between $s$ and $t$ in $P$, we can obtain the shortest distance by taking the minimum over all of these functions. See Figure~\ref{fig:minimum_f}, and refer to Section~\ref{sec:Decomposing_the_Distance_Computation} for the details.  
In other words, if we denote by $\T_p$ all regions that contain a point $p \in P$, we have
\begin{equation*}
    d(s,t) = \min \{ f_{ST}(s,t) : S \in \T_s,\hspace{1pt} T \in \T_t\}.
\end{equation*}

\begin{figure}[tbp]
  \centering
  \includegraphics{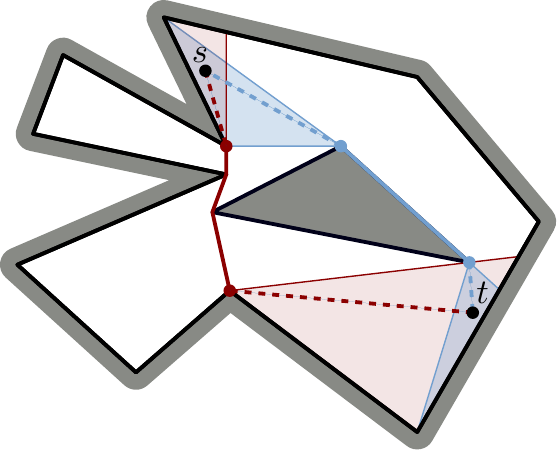}
  \caption{Two pairs of relevant regions in red and blue with the path whose length is $f_{ST}(s,t)$.}
  \label{fig:minimum_f}
\end{figure}

\subparagraph*{Lower envelope.}
Given two multi-sets $\A,\B \subseteq \T$, we want to construct a data structure that we can efficiently query at any point $(s,t)$ with $s \in \bigcap \A$ and $t \in \bigcap \B$ to find $\min\{f_{ST}(s,t): S\in \A, T \in \B\}$. We refer to this as a \LowerEnvelope data structure. We can construct such a data structure of size $O(\min \{\sizeA,\sizeB,n\}^{6+\eps})$ with $O(\log (\min\{\sizeA,\sizeB,n\}))$ query  time, or of size $O(\min \{\sizeA,\sizeB,n\}^{5+\eps})$  with $O(\log^2 (\min\{\sizeA,\sizeB,n\}))$ query time, as follows.

The functions $f_{ST}$ are four-variate algebraic functions of constant degree. Each such function gives rise to a surface in $\R^5$, which is the graph of the function~$f$.
Koltun~\cite{koltun04almos} shows that the vertical decomposition of $m$ such surfaces in $\R^5$ has complexity $O(m^{6+\eps})$, and can be stored in a data structure of size $O(m^{6+\eps})$ so that we can query the value of the lower envelope, and thus $d(s,t)$, in $O(\log m)$ time.  
More recently, Agarwal~\etal~\cite{Agarwal21_polynomial_partitioning} showed that a collection of $m$ semialgebraic sets in $\R^5$ of constant complexity can be stored using $O(m^{5 + \eps})$ space such that vertical ray-shooting queries, and thus lower envelope queries, can be answered in $O(\log^2 m)$ time.

We limit the number of functions $f_{ST}(s,t)$ by using an observation of
Chiang and Mitchell~\cite{chiang99two_point_euclid_short_path_queries_plane}. They note that we do not need to consider all pairs $S\in \A, T\in \B$, but only $\min\{\sizeA,\sizeB,n\}$ relevant pairs. 
Two regions form a \emph{relevant} pair if they belong to the same augmented shortest path map $\SPM(v)$ of some vertex $v$.
(To be specific, if $v$ is any vertex on the shortest path from $s$ to $t$, 
then the minimum is achieved for $S$ and $T$ in the shortest path map of $v$.) We thus obtain a \LowerEnvelope data structure by constructing the data structure of~\cite{Agarwal21_polynomial_partitioning} or~\cite{koltun04almos} on these $\min\{\sizeA,\sizeB,n\}$ functions.

Naively, to build a data structure that can answer shortest path queries for any pair of query points $s,t$, 
we would need to construct this data structure for all possible combinations of $\T_s$ and $\T_t$. 
The overlay of the $n$ augmented shortest path maps has worst-case complexity $\Omega(n^4)$~\cite{chiang99two_point_euclid_short_path_queries_plane}, 
which implies that we would have to build $\Omega(n^8)$ of the
\LowerEnvelope data structures. 
Indeed, this results in an $O(n^{14 + \eps})$-size data structure, 
and is one of the approaches Chiang and Mitchell consider~\cite{chiang99two_point_euclid_short_path_queries_plane}.
Next, we describe how we use cuttings to reduce the number of \LowerEnvelope data structures we construct.

\subparagraph{Cutting trees.}
Now, we explain how to determine $\T_s$ more efficiently using cuttings and cutting trees.
Suppose we have a set $\A$ 
of $N$ (not necessarily disjoint) triangles in the plane. 
A \emph{$1/r$-cutting} $\Xi$ of $\A$ is then a subdivision of the plane into constant complexity
cells, for example triangles, such that each cell $\Delta$ in $\Xi$ is
intersected by the boundaries of at most $N/r$ triangles in $\A$~\cite{Chazelle93_Cutting}. 
There can thus still be many triangles that fully contain a cell, 
but only a limited number whose boundary intersects a cell.
In our case, the regions in $\T$ are \emph{almost} triangles, called \emph{Tarski cells}~\cite{agarwal94range_searc_semial_sets}. 
See Section~\ref{sec:tarksi-cell} for a detailed description of Tarski cells.
As we explain in Section~\ref{sec:Cuttings}, we can always construct such a cutting with only $O(r^2)$ cells for these types of regions efficiently.

Let $\Xis$ be a $1/r$-cutting of $\T$. For $s\in\Delta\in\Xis$ the regions $R\in \T$ that fully contain $\Delta$ also contain $s$.
To be able to find the remaining regions in $\T_s$, we recursively build cuttings on the $N/r$ regions whose boundary intersects $\Delta$.
This gives us a so-called cutting tree.
By choosing $r$ appropriately, the cutting tree has constantly many levels.
The set $\T_s$ is then the disjoint union of all regions obtained in a root-to-leaf path in the cutting tree.
Note that using a constant number of point location queries it is possible to find all of the vertices on this path.

\subparagraph*{The multi-level data structure.} Our data structure, which we describe in detail in Sections~\ref{sec:given_Rs} and~\ref{sec:A_two-point_shortest_path_data_structure}, is essentially a multi-level cutting tree, as in~\cite{Clarkson87_new_applications_random_sampling}.
See Figure~\ref{fig:overview} for an illustration.
The first level is a cutting tree that is used to find the regions that contain~$s$, as described before. For each cell $\Delta$ in a cutting $\Xis$, we construct another cutting tree to find the regions containing $t$. Let $\A$  be the set of regions fully containing~$\Delta$ and $|\A| = k$. Then, the second-level cutting $\Xit$ is built on the $O(kn)$ candidate regions that are in a relevant pair. See Figure~\ref{fig:subproblem}. We process the regions intersected by a cell in $\Xit$ recursively to obtain a cutting tree.
Additionally, for each cell $\Delta' \in \Xit$, we construct the \LowerEnvelope data structure on the sets $\A,\B$, where $\B$ is the set of regions that fully contain $\Delta'$. This allows us to efficiently obtain $\min f_{ST}(s,t)$ for $S\in\A$ and $T \in \B$. 

\begin{figure}
    \centering
    \includegraphics{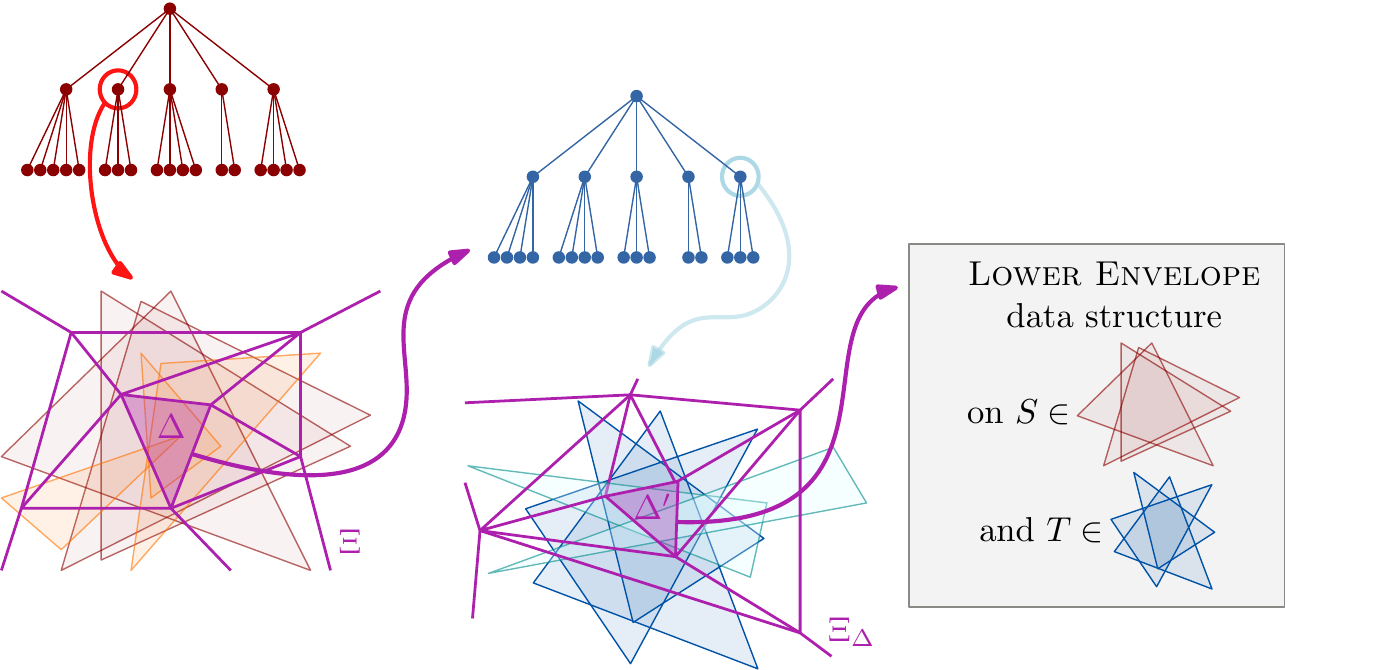}
    \caption{Overview of our data structure. The first level cutting tree (red) is built by recursively constructing a cutting $\Xis$ on the (orange) regions that intersect a cell $\Delta$ (purple). For each cell $\Delta$, we store a second level cutting tree (blue). For each cell $\Delta'$ in $\Xit$, we build a \LowerEnvelope data structure on all regions that fully contain $\Delta$ (dark red) and $\Delta'$ (dark blue).}
    \label{fig:overview}
\end{figure}

\begin{figure}[tb]
  \centering
  \includegraphics{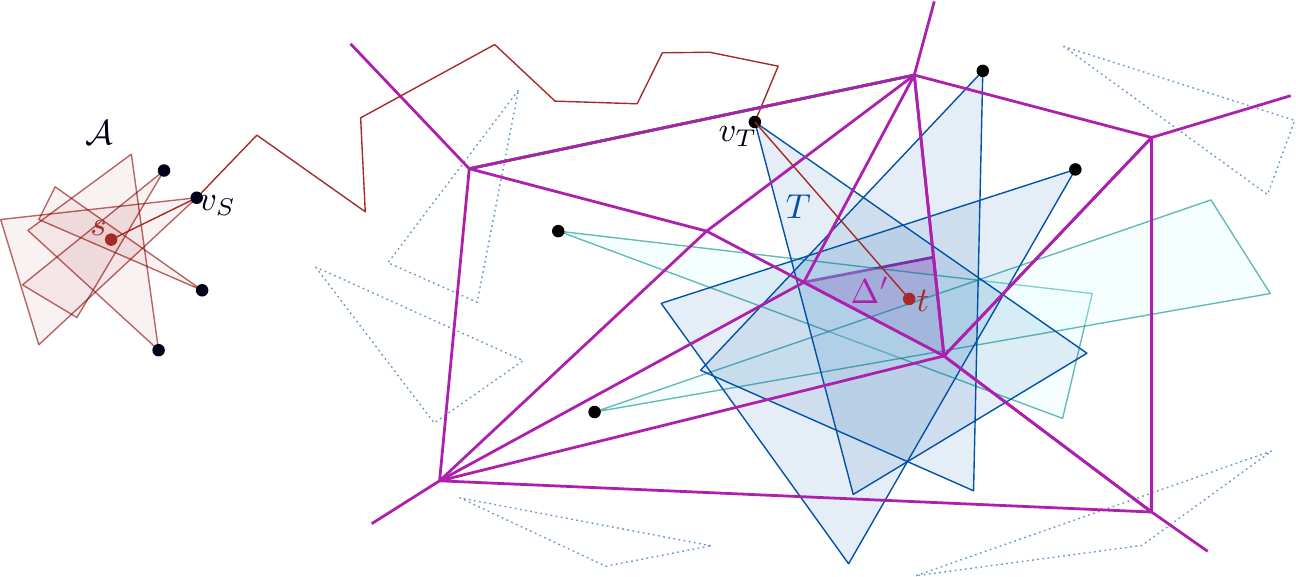}
  \caption{A sketch of the subproblem considered here, computing
    $\min_{S \in \A, T \in \T}f_{ST}(s,t)$. We build a $1/r$-cutting $\Xit$
    (shown in purple) on the set of relevant regions in $\T$ (blue). The regions
    $\T_t \subseteq \T$ either fully contain the
    cell $\Delta' \in \Xit$ of the cutting that contains $t$ (dark blue), or their
    boundaries intersect $\Delta'$ (light blue).   }
  \label{fig:subproblem}
\end{figure}

\subparagraph{Queries.}  To query our data structure with two points
$s,t$, we first locate the cell $\Delta_s$ containing $s$ in the
cutting $\Xis$ at the root. We compute $\min f_{ST}(s,t)$ for all regions $S$
that intersect $\Delta_s$, but do not fully contain $\Delta_s$, by recursively querying the child node corresponding to
$\Delta_s$. To compute $\min f_{ST}(s,t)$ for all $S$ that fully
contain $\Delta_s$, we query its associated data structure. To this
end, we locate the cell $\Delta_t$ containing $t$ in $\Xi_{\Delta_s}$,
and use its lower envelope structure to compute $\min f_{ST}(s,t)$
over all $S$ that fully contain $\Delta_s$ and all $T$ that fully
contain $\Delta_t$. We recursively query the child corresponding to
$\Delta_t$ to find $\min f_{ST}(s,t)$ over all $T$ that intersect~$\Delta_t$.

\subparagraph{Sketch of the analysis.}
By choosing $r$ as $n^{\delta}$ for some constant $\delta = O(\eps)$, we can achieve that each cutting tree has only constant height. The total query time is thus $O(\log n)$, when using the \LowerEnvelope data structure by Koltun~\cite{koltun04almos}.
Next, we sketch the analysis to bound the space usage of the first-level cutting tree, under the assumption that a second-level cutting tree, including the \LowerEnvelope data structures, uses $O(n^2 \min\{k,n\}^{6+\eps})$ space (see Section~\ref{sec:given_Rs}). 

To bound the space usage, we analyze the space used by the \emph{large} levels, where the number of regions is greater than $n$, and the \emph{small} levels of the tree separately, see Figure~\ref{fig:analysis}. 
There are only $O(n^2)$ large nodes in the tree. For these $\min\{k,n\} = n$, so  each stores a data structure of size $O(n^{8+\eps})$. For the small nodes, the size of the second-level data structures decreases in each step, as $k$ becomes smaller than $n$. Therefore, the space of the root of a small subtree, which is $O(n^{8+\eps})$, dominates the space of the other nodes in the subtree. As there are $O(n^2)$ small root nodes, the resulting space usage is $O(n^{10 + \eps})$.

\begin{figure}
    \centering
    \includegraphics{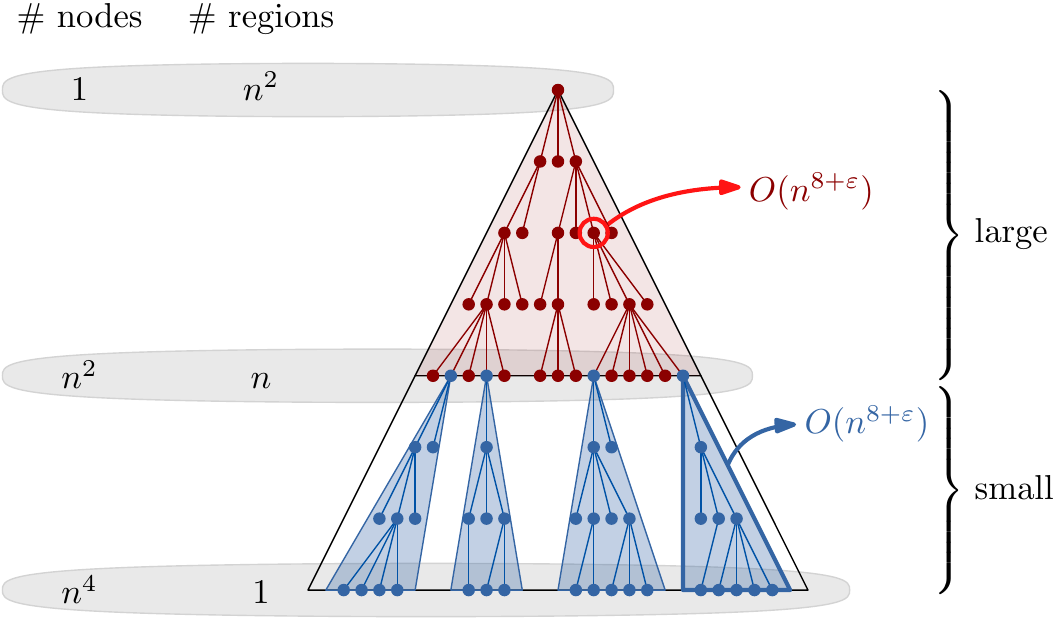}
    \caption{We analyze the large levels, built on $\geq n$ regions, and the small levels, built on $< n$ regions, separately. The total space usage is $O(n^{10+\eps})$.}
    \label{fig:analysis}
\end{figure}

\subparagraph{A space-time trade-off.} We can achieve a trade-off between the space usage and the query time by grouping the polygon vertices, see Section~\ref{sec:Space-_time_trade-off} for details. We group the vertices into $\ell$ groups, and construct our data structure on each set of $O(n^2/\ell)$ regions generated by a group. This results in a query time of $O(\ell\log^2 n)$ and a space usage of $O(n^{9+\eps}/\ell^{4+O(\eps)})$ when we apply the vertical ray-shooting \LowerEnvelope data structure of Lemma~\ref{lem:vertical_ray_shooting}.

\subparagraph{A data structure for $s$ or $t$ on the boundary.} In Section~\ref{sec:A_data_structure_for_$s$_on_the_boundary}, we show how to adapt our data structure to the case where one (or both) of the query points, say $s$, is restricted to lie on the boundary of the domain. The main idea is to use the same overall approach as before, but consider a different set of regions for $s$ and $t$. The set of candidate regions for $s$ now consists of intervals formed
by the intersection of the \SPM regions with the boundary of $P$. Instead of building a cutting tree on these regions, we build a segment tree, where nodes again store a cutting tree on the regions for $t$. As the
functions $f_{ST}(s,t)$ are now only three-variate rather than four-variate, the resulting space usage is only $O(n^{6+\eps})$. 

When both query point are restricted to the boundary, the set of candidate regions for both $s$ and $t$ consists of intervals along $\partial P$. In this case, a node of the segment tree stores another segment tree on the regions for $t$. Because the functions $f_{ST}(s,t)$ are only bivariate, the space usage is reduced to $O(n^{4+\eps})$.

\subsection{Conclusion}\label{sec:concluding_remarks}

\subparagraph{Improving the space bound.} Both \LowerEnvelope data structures we use are actually more powerful
than we require: one allows us to perform point location queries in the
vertical decomposition of the \emph{entire} arrangement, and the other allows us to perform vertical ray-shooting from \emph{any} point in the arrangement.
While we are only
interested in lower envelope queries, i.e.\ vertical ray-shooting from a single plane. The (projected) lower
envelope of $m$ four-variate functions has a complexity of only
$O(m^{4+\eps})$~\cite{Sharir94lower_envelopes}. However, it is unclear if we can store this lower
envelope in a data structure of size $O(m^{4+\eps})$ while retaining
the $O(\log m)$ query time.
This would immediately reduce the space of our data structure to~$O(n^{8+\eps})$.

\section{Decomposing the distance computation}
\label{sec:Decomposing_the_Distance_Computation}

As described in Section~\ref{sec:overview}, we assume that there is at least one vertex on the shortest path between two query points. We will decompose the distance computation using the regions from the \emph{augmented shortest path maps} of all vertices.
Let $\SPM'(p)$ be the \emph{shortest path map} of source point $p$ in
$P$, i.e.\ a partition of $P$ into maximal closed regions such that for every
point in such a region $R$ the shortest path to $p$ visits the same
sequence of polygon
vertices~\cite{hershberger99optim_algor_euclid_short_paths_plane}. 
Let $v_R$ be the first vertex on the shortest path from any point in $R$
towards $p$. We refer to $v_R$ as the \emph{apex} of region $R$. We
``triangulate'' every region $R$ of $\SPM'(p)$ by connecting the
boundary vertices of $R$ with the apex $v_R$. The resulting
subdivision of $P$ is the \emph{augmented shortest path map}
$\SPM(p)$ of $P$ with respect to source point $p$. All regions in
$\SPM(p)$ consist of at most three curves, two of which are line
segments, and the remaining edge is either a line segment or a piece of a hyperbola, see Figure~\ref{fig:spm}. The (augmented) shortest path map has
complexity $O(n)$~\cite{hershberger99optim_algor_euclid_short_paths_plane}.

Consider the augmented shortest path maps for all vertices of $P$, and
let $\T$ denote the multi-set of all such
regions. Hence, \T
consists of $O(n^2)$ regions. We associate with each region in $\T$ the vertex of $P$ that generated the region. 
Let $\T_s \subset \T$ be
the subset of regions that contain a point $s \in P$.

For a pair $S,T \in \T \times \T$ we define
$f_{ST}(s,t) = \|sv_S\| + d(v_S,v_T) + \|v_Tt\|$ for $s \in S$ and $t \in T$, see Figure~\ref{fig:minimum_f}. Observe that
$f_{ST}$ is essentially a 4-variate algebraic function of constant
complexity (since $d(v_S,v_T)$ is constant with respect to $s$ and
$t$). For a pair of points $s,t$ that cannot see each other we then
have that

\begin{equation}
  \label{eq:shortest}
  d(s,t)  = \min_{S \in \T_s, T \in \T_t} \|sv_S\| + d(v_S,v_T) +
   \|v_Tt\| = \min \{ f_{ST}(s,t) : S \in \T_s,\hspace{1pt} T \in \T_t\} .
\end{equation}

Note that whenever either of the two query points lies on a polygon vertex, we can directly answer the query by considering the shortest path map of that vertex. 

\subparagraph{Relevant region-pairs.} 
We say that a pair of regions
$(S,T)$ is \emph{relevant} if and only if they appear in a single
augmented shortest path map $\SPM(v)$ for some vertex $v$. We say a region is relevant if it occurs in some relevant pair. Let
$\Rel(\X) = \{(S,T) \mid (S,T) \in \X \land (S,T) \text{ is relevant}\}$ denote the subset of relevant pairs of regions from a set of pairs $\X$. Chiang
and Mitchell~\cite{chiang99two_point_euclid_short_path_queries_plane}
observed that $d(s,t)$ is defined by a relevant pair. Let
$\L^\X(s,t)= \min \{f_{ST}(s,t) : (S,T) \in \Rel(\X) \}$. We thus have:

\begin{observation}[Chiang
  and
  Mitchell~\cite{chiang99two_point_euclid_short_path_queries_plane}]
  \label{obs:there_is_an_optimal_relevant_pair}
  There exists a pair $(S,T) \in \T_s \times \T_t$ such that: (i) the
  pair $(S,T)$ is relevant, and (ii) $d(s,t) = f_{ST}(s,t)$. Hence,
  \[ d(s,t) = \L^{\T_s \times \T_t}(s,t)
            = \min \{ f_{ST}(s,t) : (S,T) \in \Rel(\T_s \times \T_t)\}.
  \]
\end{observation}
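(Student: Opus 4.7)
The plan is to prove the observation by exhibiting a concrete relevant pair that witnesses the shortest distance. Since $s$ and $t$ cannot see each other by assumption, a shortest path $\pi$ from $s$ to $t$ must visit at least one vertex of $P$; I fix any such vertex $v$ and consider the augmented shortest path map $\SPM(v)$. Let $S$ be the region of $\SPM(v)$ containing $s$ and $T$ the region containing $t$. By construction $S$ and $T$ belong to the same augmented shortest path map, so $(S,T)$ is relevant, and trivially $S \in \T_s$ and $T \in \T_t$.

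The core identity $d(s,t) = f_{ST}(s,t)$ will then follow by decomposing $\pi$ at $v$. Because $v$ lies on $\pi$, we have $d(s,t) = d(s,v) + d(v,t)$. The definition of $\SPM(v)$ tells us that the shortest path from $s \in S$ to $v$ starts with the segment $sv_S$ (which lies in $P$, since $v_S$ is the apex of the region containing $s$), so $d(s,v) = \|sv_S\| + d(v_S, v)$; symmetrically $d(v,t) = d(v, v_T) + \|v_T t\|$. Since $v_S$ and $v_T$ both lie on $\pi$ with $v$ between them, the subpath of $\pi$ from $v_S$ to $v_T$ is itself a shortest path, giving $d(v_S, v_T) = d(v_S, v) + d(v, v_T)$. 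Adding these equalities yields
\[
d(s,t) \;=\; \|sv_S\| + d(v_S, v_T) + \|v_T t\| \;=\; f_{ST}(s,t),
\]
which establishes (i) and (ii).

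For the displayed equation I would then argue both inequalities. For the $\le$ direction, any $(S',T') \in \Rel(\T_s \times \T_t)$ has $f_{S'T'}(s,t) = \|sv_{S'}\| + d(v_{S'}, v_{T'}) + \|v_{T'} t\|$ equal to the length of a genuine $s$-to-$t$ path in $P$: the segments $sv_{S'}$ and $v_{T'}t$ lie in $P$ because $s$ (resp.\ $t$) sees the apex of its containing region, and $d(v_{S'}, v_{T'})$ is realized inside $P$. Hence $d(s,t) \le \L^{\T_s \times \T_t}(s,t)$, and combining this with the pair $(S,T)$ constructed above forces equality.

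I do not anticipate a serious technical obstacle. The only delicate point is verifying that the vertex $v$ chosen on $\pi$ is consistent with the apices $v_S$ and $v_T$ of the $\SPM(v)$-regions containing $s$ and $t$, so that the telescoping $d(v_S,v_T) = d(v_S,v) + d(v,v_T)$ is legitimate. This is immediate from the definition of $\SPM(v)$ together with the optimal substructure of shortest paths, but I would state it explicitly to avoid confusing apices arising from different shortest path maps $\SPM(v')$ for $v' \neq v$.
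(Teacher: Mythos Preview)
Your proposal is correct and follows essentially the same approach as the paper: pick a vertex $v$ on an optimal path $\pi(s,t)$ and take $S,T$ to be the regions of $\SPM(v)$ containing $s$ and $t$. The only difference is that the paper chooses $v$ to be the \emph{first} vertex of $\pi(s,t)$, which forces $v_S=v$ and makes the identity $d(s,t)=f_{ST}(s,t)$ immediate without the telescoping through $v_S$ and $v_T$; your version with an arbitrary $v$ on $\pi$ works too, and your remark about the ``delicate point'' (that $v_S,v_T$ lie on $\pi$ with $v$ between them, or equivalently that the triangle inequality is tight) is exactly what closes the argument.
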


\begin{proof}
  Let $\geod(s,t)$ be an optimal path, and let $v$ be the first vertex
  of $\geod(s,t)$. Pick $S$ and $T$ as regions in $\SPM(v)$ containing
  $s$ and $t$, respectively.
\end{proof}

Note that in this proof we can choose $S$ as any region in $\SPM(v)$ that contains $s$. Thus, if $s$ lies on the boundary of multiple regions, any of these regions will achieve the optimal distance. This implies that the set $\T_s$ can be limited to only one region from each shortest path map, so $|\T_s| = O(n)$, and similarly $|\T_t| = O(n)$.

\begin{lemma}
  \label{lem:size_rels}
  Let $\A \subseteq \T_s$, $\B \subseteq \T_t$ be two sets of regions. The set
  $\Rel(\A \times \B)$ contains at most
  $m = O(\min \{\sizeA,\sizeB \}) \leq O(n)$ pairs of regions.
\end{lemma}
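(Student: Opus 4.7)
The plan is to exploit the fact that, as noted right after Observation~\ref{obs:there_is_an_optimal_relevant_pair}, each of the sets $\T_s$ and $\T_t$ can be restricted to contain at most one region per augmented shortest path map. Since $\A \subseteq \T_s$ and $\B \subseteq \T_t$, the same bound is inherited by $\A$ and $\B$: for every vertex $v$ of $P$, the intersections $\A \cap \SPM(v)$ and $\B \cap \SPM(v)$ each have size at most one.

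Using this, I would bound $|\Rel(\A \times \B)|$ by a straightforward charging argument. Each region $S \in \A$ is generated by a unique vertex $v_S$, i.e.\ $S$ lies in exactly one shortest path map, namely $\SPM(v_S)$. By definition, $(S,T)$ is relevant only if $T$ also belongs to $\SPM(v_S)$. Since $|\B \cap \SPM(v_S)| \leq 1$, there is at most one $T \in \B$ that can form a relevant pair with $S$. Summing over all $S \in \A$ gives $|\Rel(\A \times \B)| \leq |\A|$. By the symmetric argument (each $T \in \B$ participates in at most one relevant pair with a region of $\A$), we also get $|\Rel(\A \times \B)| \leq |\B|$, and hence $|\Rel(\A \times \B)| \leq \min\{|\A|, |\B|\}$.

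Finally, since there are $n$ vertices in $P$ and $\A$ contains at most one region per $\SPM(v)$, we have $|\A| \leq n$, and likewise $|\B| \leq n$. Combining yields $|\Rel(\A \times \B)| \leq \min\{|\A|, |\B|\} \leq O(n)$, as claimed.

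There is no real obstacle here; the only subtlety is making sure that the ``one region per shortest path map'' property for $\T_s$ (which is an assumption we can make without loss of generality, justified by Observation~\ref{obs:there_is_an_optimal_relevant_pair}) is explicitly inherited by arbitrary subsets $\A \subseteq \T_s$. Once this is stated clearly, the counting argument is immediate and the bound $O(\min\{|\A|, |\B|\}) \leq O(n)$ follows.
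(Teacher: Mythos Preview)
Your proof is correct and uses essentially the same charging argument as the paper: assign each relevant pair $(S,T)$ to $S$, and bound the number of partners $T$ per $S$ by a constant. The only minor difference is in how that constant is obtained: the paper argues geometrically that at most $O(1)$ regions of a single $\SPM(v)$ can contain the fixed point $t$, whereas you invoke the WLOG convention (stated after Observation~\ref{obs:there_is_an_optimal_relevant_pair}) that $\T_t$ already contains at most one region per map, yielding the slightly sharper ``at most one'' instead of $O(1)$.
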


\begin{proof}
    As stated before, $\T_s$, and
  thus \A, contains at most $O(n)$ regions. The same applies for
  $\T_t$ and \B, hence $\sizeA$ and $\sizeB$ are $O(n)$.

  Assume without loss of generality that $\sizeA \leq \sizeB$. What remains
  to show is that $\Rel(\A \times \B)$ contains at most $O(\sizeA)$
  regions. We charge every pair $(S,T) \in \Rel(\A \times \B)$
  to $S$, and argue that each region $S$ can be charged at most a
  constant number of times. Let $v$ be the vertex associated with $S$, i.e.\ the vertex that generated the region. Consider all
  regions $T_1,\dots,T_z$ such that
  $(S,T_i) \in \Rel(\A \times \B)$. All of these regions must
  appear in $\SPM(v)$ and contain $t$. There are
  at most $O(1)$ such regions. Hence, it follows that each region $S$
  is charged at most $O(1)$ times, and there are thus at most $O(\sizeA)$
  pairs of regions in $\Rel(\A \times \B)$.
\end{proof}

Furthermore, we observe that computing $\L^{\A \times \B}(s,t)$ is
decomposable:

\begin{lemma}
  \label{lem:decompose}
  Let $\A, \B \subseteq \T$ be sets of regions, let $\A_1,\dots,\A_k$ be
  a partition of $\A$, and let $\B_1,\dots,\B_\ell$ be a partition of
  $\B$. We have that
  \[
    \L^{\A \times \B}(s,t)
           = \min \{\L^{\A_i\times \B_j}(s,t) : 
           i \in \upto{1}{k}, j \in \upto{1}{\ell}\}
           .
  \]
\end{lemma}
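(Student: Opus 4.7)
\medskip

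\noindent\textbf{Proof plan.} The statement is essentially a set-theoretic unpacking of the definition of $\L^\X(s,t)$, so my plan is to chase definitions and verify that the operations $\Rel(\cdot)$ and $\min\{f_{ST}(s,t) : \cdot\}$ both distribute over the union induced by the partitions.

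\medskip

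\noindent\textbf{Step 1: The product partitions.} First I would observe that whenever $\A_1,\dots,\A_k$ partitions $\A$ and $\B_1,\dots,\B_\ell$ partitions $\B$, the family $\{\A_i \times \B_j : i \in \upto{1}{k},\ j \in \upto{1}{\ell}\}$ forms a partition of $\A \times \B$. This is standard: every pair $(S,T) \in \A \times \B$ lies in exactly one $\A_i$ and exactly one $\B_j$, hence in exactly one cell $\A_i \times \B_j$.

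\medskip

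\noindent\textbf{Step 2: Relevance distributes over union.} Next I would argue that $\Rel$ commutes with unions. The definition $\Rel(\X) = \{(S,T) \in \X : (S,T) \text{ is relevant}\}$ only filters pairs based on a property intrinsic to the pair itself, independent of the surrounding set. Therefore
\[
  \Rel(\A \times \B)
  \;=\; \Rel\!\left(\bigcup_{i,j} \A_i \times \B_j\right)
  \;=\; \bigcup_{i,j} \Rel(\A_i \times \B_j).
\]

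\medskip

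\noindent\textbf{Step 3: Minimum distributes over union.} Finally, for any function $g$ and any family of finite sets $X_{ij}$, one has $\min\{g(x) : x \in \bigcup_{i,j} X_{ij}\} = \min_{i,j}\min\{g(x) : x \in X_{ij}\}$ (with the convention that the empty minimum is $+\infty$, which is consistent with $\L^\emptyset \equiv +\infty$). Applying this to $g = f_{ST}(s,t)$ and $X_{ij} = \Rel(\A_i \times \B_j)$ gives
\[
  \L^{\A \times \B}(s,t)
  \;=\; \min \{f_{ST}(s,t) : (S,T) \in \Rel(\A \times \B)\}
  \;=\; \min_{i,j} \L^{\A_i \times \B_j}(s,t),
\]
which is exactly the claimed decomposition.

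\medskip

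\noindent\textbf{Obstacle.} There is no genuine technical obstacle here; the lemma is a purely combinatorial consequence of the fact that $\Rel$ is defined pairwise and $\min$ distributes over unions. The only minor subtlety is being explicit that the partitioning assumption is used only to ensure that the product is covered, not that the cells are disjoint (the statement would in fact remain true if the $\A_i$ and $\B_j$ were merely a cover rather than a partition of $\A$ and $\B$, since $\min$ is idempotent).
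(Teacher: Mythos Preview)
Your proof is correct and follows essentially the same approach as the paper: both argue that $\Rel$ distributes over the union induced by the product partition (the paper derives this by showing $\Rel((\mathcal{X}\cup\mathcal{Y})\times\mathcal{Z}) = \Rel(\mathcal{X}\times\mathcal{Z})\cup\Rel(\mathcal{Y}\times\mathcal{Z})$ and iterating, whereas you observe directly that $\Rel$ is a pairwise filter), and then use that $\min$ is decomposable over unions. Your final remark that a cover rather than a partition would suffice is a valid sharpening not mentioned in the paper.
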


\begin{proof}
  Recall that
  $\L^{\A \times \B}(s,t) =
  \min \{f_{ST}(s,t) : (S,T) \in \Rel(\A \times \B)\} $. Since computing
  the minimum is decomposable, all that we have to argue is that
  $\Rel(\A \times \B) = \bigcup_{i \in \upto{1}{k}, j \in \upto{1}{\ell}} \Rel(\A_i
  \times \A_j)$. By definition of $\Rel$ we have
  \begin{align*}
    \Rel((\mathcal{X} \cup \mathcal{Y}) \times \mathcal{Z}) &= \{(S,T) \mid (S,T) \in (\mathcal{X}
                                       \cup \mathcal{Y}) \times \mathcal{Z} \land (S,T)
                                       \text{ is relevant }\}\\
                                     &= \{(S,T) \mid (S,T) \in \mathcal{X} \times \mathcal{Z} \land (S,T)
                                       \text{ is relevant }\}\\
                                     &\cup
                                       \{(S,T) \mid (S,T) \in \mathcal{Y} \times \mathcal{Z} \land (S,T)
                                       \text{ is relevant }\}\\
                                       &= \Rel(\mathcal{X} \times \mathcal{Z}) \cup \Rel(\mathcal{Y} \times \mathcal{Z}),
  \end{align*}
  and symmetrically $\Rel(\mathcal{X} \times ( \mathcal{Y} \cup \mathcal{Z})) = \Rel(\mathcal{X} \times \mathcal{Y}) \cup \Rel(\mathcal{X} \times \mathcal{Z})$.
  The lemma now follows from repeated application of these equalities.
\end{proof}

By Observation~\ref{obs:there_is_an_optimal_relevant_pair} we have
that $d(s,t)=\L^{\T_s \times \T_t}(s,t)$. Applying
Lemma~\ref{lem:decompose} to $\T_s$ and $\T_t$, thus tells us that
we can compute $d(s,t)$ using a divide and conquer approach. We will use
cuttings to do so.

\section{Cuttings}
\label{sec:Cuttings}

In this section we first introduce Tarski cells. We then define a $1/r$-cutting on a set of Tarski cells, and prove that such a cutting exists and can be computed efficiently.
\subsection{Tarski cells}
\label{sec:tarksi-cell}

A \emph{Tarski cell} is a region of $\R^2$ bounded by a constant
number of semi-algebraic curves of constant
degree~\cite{agarwal94range_searc_semial_sets}. Observe that the regions generated by the shortest path maps are thus Tarski cells. A \emph{pseudo
  trapezoid} is a Tarski cell that is bounded by at most two
vertical line segments and at most two $x$-monotone curves. Observe
that a pseudo trapezoid itself is thus also $x$-monotone. Given a set
\T of $N$ of Tarski cells, let $\A(\T)$ denote their arrangement, and
let $\VD(\T)$ denote the \emph{vertical decomposition} of $\A(\T)$ 
into pseudo trapezoids. It is well known that both $\A(\T)$ and
$\VD(\T)$ have complexity $O(N^2)$ (as any pair of curves intersects
at most a constant number of times, and each curve has at most a
constant number of points at which it is tangent to a vertical
line). 

\subsection{Cuttings on Tarski cells}

Let \T be a set of $N$ regions, each a Tarski
cell, and let $r \in \upto{1}{N}$ be a parameter. 
Before we can introduce cuttings, we need to define conflict lists.
We say, a region $T \in \T$ \emph{conflicts} with a cell
$\nabla \subseteq \R^2$ if the boundary of $T$ intersects the interior
of $\nabla$. The \emph{conflict list} $\C_\nabla$ of $\nabla$ (with
respect to \T) is the set of all region $T \in \T$ that conflict with
it.

A \emph{$1/r$-cutting}
$\Xi$ of \T is a partition of $\R^2$ into Tarski cells, each of which
is intersected by the boundaries of at most $N/r$ regions in \T,
i.e.\ has a conflict list of size at most $N/r$~\cite{Chazelle93_Cutting}. The
following two Lemmas summarize that such cuttings exist, and can be
computed efficiently.

\begin{lemma}
  \label{lem:vertical_decomposition}
  Let \T be a set of $N$ Tarski cells, and let $c$ be a constant. For
  a random subset $\RR \subseteq \T$ of size $r$, we can compute $\VD(\RR)$, and all conflict lists in
  expected $O(Nr)$ time. Furthermore, we have that
  \[
    \Exp\left[ \sum_{\nabla \in \VD(\RR)} |\C_\nabla|^c  \right] =
    O(r^2(N/r)^c).
  \]
\end{lemma}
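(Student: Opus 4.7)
The plan is to apply the Clarkson-Shor random sampling framework, together with the Chazelle-Friedman exponential-decay lemma for higher moments, after checking that the vertical decomposition of a set of Tarski cells satisfies the standard ``bounded combinatorial complexity'' hypotheses: each pseudo trapezoid of $\VD(\RR)$ is determined by $O(1)$ of the input cells (the ones whose curves form its top and bottom boundaries and whose events produce its left and right walls), and by Bezout's theorem any two boundary curves meet in $O(1)$ points, so that the vertical decomposition of a sample of size $i$ always has complexity $O(i^2)$.

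For the construction, I would build $\VD(\RR)$ by randomized incremental construction. Insert the regions of $\RR$ one at a time in uniformly random order, maintaining the vertical decomposition $\VD(\RR_i)$ together with a conflict list from $\T$ to each of its pseudo trapezoids. When region $T_{i+1}$ is inserted, the trapezoids whose interiors its boundary crosses are destroyed and replaced by a local batch of new trapezoids; each new conflict list is computed by scanning (and filtering) the union of the conflict lists of the destroyed trapezoids. Standard backward analysis in the Clarkson-Shor paradigm shows that the expected total work is $O(Nr)$, and this simultaneously produces the final conflict lists $\C_\nabla$ of $\VD(\RR)$.

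For the higher-moment bound, I would partition the cells of $\VD(\RR)$ into \emph{conflict buckets} $B_k = \{\nabla \in \VD(\RR) : |\C_\nabla| \in [kN/r,\,(k+1)N/r)\}$ for integers $k \geq 0$. The Chazelle-Friedman exponential-decay lemma gives
\[
    \Exp[\,|B_k|\,] \;\leq\; C\, r^2 \cdot 2^{-\Omega(k)}
\]
for every $k$. Weighting each cell in $B_k$ by the crude upper bound $((k+1)N/r)^c$ on $|\C_\nabla|^c$ and summing over $k$, the exponential factor dominates the polynomial $(k+1)^c$, so the series collapses to $O(r^2 (N/r)^c)$, as claimed.

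The main obstacle is purely the verification that Clarkson-Shor and Chazelle-Friedman apply verbatim to vertical decompositions of Tarski cells, where one must carefully check that the ``defining set'' of a pseudo trapezoid has constant size and is consistent with the stated conflict relation (boundary of a region meets interior of a trapezoid). All of this follows from the fact that Tarski cells are bounded by $O(1)$ semialgebraic curves of constant degree, so the abstract machinery applies without modification and yields both the running-time estimate and the moment inequality.
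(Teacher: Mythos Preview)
Your proposal is correct and follows essentially the same route as the paper: randomized incremental construction with conflict lists for the $O(Nr)$ running time, and the exponential-decay lemma applied to conflict-size buckets for the $c$th-moment bound. The only cosmetic differences are that the paper attributes the exponential-decay lemma to Har-Peled's presentation rather than Chazelle--Friedman, and it writes the bucketing via the tail counts $M_{i,\geq k}$ instead of your $B_k$, but the computation is the same.
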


\begin{proof}
  The bound on the expected value of the quantity
  $W^c_i = \sum_{\nabla \in \VD(\RR)} |\C_\nabla|^c$ follows from a
  Clarkson-Shor type sampling
  argument~\cite{clarkson89applic_ii}. Therefore, we can compute
  $\VD(\RR)$ and its conflict lists by running the first $r$ rounds of
  a randomized incremental construction algorithm. 
  For completeness,
  we include the detailed argument here. Our presentation is based on that of
  Har-Peled~\cite{harPeled2011geometric}.

  \subparagraph{Algorithm.} We build the vertical decomposition
$\VD(\RR)$ by running a randomized incremental construction
algorithm~\cite{harPeled2011geometric} to construct $\VD(\T)$ for only
$r$ rounds.  Let $T_1,\dots,T_r$ denote the regions in the random order,
and let $\RR_i=\{T_1,\dots,T_i\}$ be the subset containing the first $i$
regions. So $\RR=\RR_r$. We maintain a bipartite conflict graph during the algorithm;
the sets of vertices are the pseudo-trapezoids in $\VD(\RR_i)$ and the
not-yet inserted regions from $\T \setminus \RR_i$. The conflict graph
has an edge $(\nabla,T)$ if and only if $T \in \C^\T_\nabla$. Using
this conflict graph, it is straightforward to insert a new region $T_i$
into $\VD(\RR_{i-1})$; the conflict graph tells us exactly which
pseudo trapezoids from $\VD(\RR_{i-1})$ should be deleted; each of
which is replaced by $O(1)$ new pseudo-trapezoids.

Let $E_i$ be the total number of newly created edges in the conflict
graph in step $i$ (i.e.\ the total number of new entries in the
conflict lists), and let $W_i = W^1_i$ denote the total size of all
conflicts of the pseudo-trapezoids in $\VD(\RR_i)$. The time required
by step $i$ of the algorithm is linear in $E_i$. We will argue that
$\Exp[E_i] = O(N)$, and thus the total expected running time is
$O(Nr)$.

\subparagraph{Analyzing $\Exp[E_i]$.} For a particular pseudo
trapezoid $\nabla \in \VD(\RR_i)$, the probability that $\nabla$ was
created by inserting region $T_i$ is at most $4/i$ (since $T_i$ must
contribute to one of the four boundaries of $\nabla$). Therefore, the
expected size of $E_i$, conditioned on the fact that the vertical
decomposition after $i$ rounds is $\VD(\RR_i)$, is
$\Exp[E_i \mid \VD(\RR_i) ] = O(\Exp[W_i]/i)$. Using the law of total
expectation, it then follows that
$\Exp[E_i] = \Exp[\Exp[W_i \mid \VD(\RR_i)]] = O(\Exp[W_i]/i)$. As we
argue next, we have that $\Exp[W^c_i]=O((N/i)^c i^2)$, and therefore
$\Exp[W_i]=\Exp[W^1_i]=O(Ni)$. In turn, this gives us
$\Exp[E_i] = O(Ni/i) = O(N)$ as claimed.

\subparagraph{Analyzing $\Exp[W^c_i]$.} Let $M_{i,\geq k}$ denote the
set of pseudo trapezoids from $\VD(\RR_i)$ whose conflict list has size at
least $k$. So note that $M_{i, \geq 0}$ is simply the total number of
trapezoids in $\VD(\RR_i)$.

The main idea is that the probability that a pseudo trapezoid from
$\VD(\RR_i)$ has a conflict list that is at least $t$ times the
average size $N/i$ decreases exponentially with $t$. In particular,
since every such pseudo trapezoid $\nabla$ is defined by at most
$d=O(1)$ regions from \T, and conflicts with more than $tN/i$ regions,
the probability of it appearing in $\VD(\RR_i)$ is at most
${n-d-t(N/i) \choose i} / {n \choose i} \approx
(1-(n/i))^{t(n/i)}(n/i)^d$. This is roughly $1/e^t$ times the
probability that a pseudo trapezoid with a conflict list of size $n/i$
appears in $\VD(\RR_i)$ (which, analogously, is roughly
$(1-(n/i))^{(n/i)}(n/i)^d$). This then allows us to express the
expected number of such ``heavy'' pseudo trapezoids by the expected
number of ``average'' pseudo trapezoids. More precisely, we have that
$\Exp[M_{i, \geq tN/i}] = O\left(\Exp[M_{i, \geq 0}] \cdot t^d\cdot
  \textrm{tiny}(t)\right)$, where
$\textrm{tiny}(t)=1/e^{(t/2)}$~\cite[Lemma
8.7]{harPeled2011geometric}. Har-Peled refers to this as the
exponential decay Lemma. 

We then get
\begin{align*}
  \Exp \left[\sum_{\nabla \in \VD(\RR_i)} |C^\T_\nabla|^c \right]
      &\leq \Exp\left[ \sum_{t \geq 1} \left( t \frac{N}{i} \right)^c (M_{i,\geq
        (t-1)(N/i)}-M_{i, \geq t(N/i)})  \right] \\
      &\leq \Exp\left[ \left( \frac{N}{i} \right)^c
                  \sum_{t \geq 0} (t+1)^c M_{i, \geq t(N/i)} \right] \\
      &\leq \left( \frac{N}{i} \right)^c
                  \sum_{t \geq 0} (t+1)^c ~ \Exp\left[ M_{i, \geq t(N/i)} \right]
        \text{ \{exponential decay lemma\} } \\
      &\leq \left( \frac{N}{i} \right)^c
                  \sum_{t \geq 0} (t+1)^cO\left(t^d \, \textrm{tiny}(t)
        ~\Exp\left[ M_{i, \geq 0} \right] \right)  \\
      &\leq O\left( \left( \frac{N}{i} \right)^c ~\Exp\left[ M_{i,\geq 0} \right]
                  \sum_{t \geq 0} (t+1)^{c+d} \, \textrm{tiny}(t) \right)\\
      &= O\left( \left( \frac{N}{i} \right)^c \Exp\left[ M_{i, \geq 0}
        \right] \right)
      = O\left( \left( \frac{N}{i} \right)^c i^2 \right).
\end{align*}

So in particular, for $i=r$, we obtain $\Exp[W^c_r]=O(r^2(N/r)^c)$ as
claimed. This completes the proof. \qedhere
\end{proof}

\begin{lemma}
  \label{lem:cuttings}
  Let \T be a set of $N$ Tarski cells, and $r \in \upto{1}{N}$ a
  parameter. A $1/r$-cutting $\Xi$ of \T of size $O(r^2)$, together
  with its conflict lists, can be computed in expected $O(Nr)$ time.
\end{lemma}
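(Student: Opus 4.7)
The plan is to construct $\Xi$ by a two-level sampling scheme. First, take a random subset $\RR \subseteq \T$ of size $r$ and apply Lemma~\ref{lem:vertical_decomposition} to compute $\VD(\RR)$ together with all conflict lists in expected time $O(Nr)$. This already partitions $\R^2$ into $O(r^2)$ Tarski cells (pseudo trapezoids), and the expected total conflict-list size is $O(Nr)$ by the case $c=1$ of the lemma. The only remaining issue is that some \emph{heavy} cells $\nabla \in \VD(\RR)$ may still satisfy $|\C_\nabla| > N/r$, and these must be refined before we can call the result a $1/r$-cutting.

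To repair a heavy cell $\nabla$ with $k_\nabla := |\C_\nabla| > N/r$, set $r_\nabla := \lceil k_\nabla r / N \rceil$, apply Lemma~\ref{lem:vertical_decomposition} to the set $\C_\nabla$ with sample size $r_\nabla$, and clip the resulting vertical decomposition to $\nabla$. Any region of $\T$ whose boundary crosses a sub-cell of $\nabla$ must already cross $\nabla$ and hence lie in $\C_\nabla$, so after this refinement each sub-cell has conflict list (with respect to $\T$) of expected size at most $k_\nabla / r_\nabla \leq N/r$. A Las Vegas re-sampling step, justified by the exponential-decay argument implicit in the proof of Lemma~\ref{lem:vertical_decomposition}, turns this expected bound into a worst-case guarantee after $O(1)$ expected trials per heavy cell. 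Intersecting pseudo trapezoids with the Tarski cell $\nabla$ keeps all sub-cells Tarski cells of constant description complexity, as required.

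For the size and time bounds I apply Lemma~\ref{lem:vertical_decomposition} once more, now with $c=2$. The number of sub-cells produced inside $\nabla$ is $O(r_\nabla^2)$, hence
\begin{equation*}
  \Exp\!\left[\,\sum_\nabla r_\nabla^2\,\right]
    \;=\; O\!\left((r/N)^2\right) \cdot \Exp\!\left[\,\sum_\nabla |\C_\nabla|^2\,\right]
    \;=\; O\!\left((r/N)^2 \cdot r^2 (N/r)^2\right)
    \;=\; O(r^2),
\end{equation*}
so $|\Xi| = O(r^2)$ in expectation. The refinement of a heavy cell takes expected time $O(k_\nabla r_\nabla) = O(k_\nabla^2 r / N)$, and summing yields a total refinement cost of $O((r/N)\sum_\nabla k_\nabla^2) = O(Nr)$, matching the $O(Nr)$ cost of the top level. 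The main subtlety I foresee is converting the expectation-only guarantee of Lemma~\ref{lem:vertical_decomposition} into the worst-case conflict-list bound $|\C_\nabla| \leq N/r$ that the definition of a cutting demands; everything else is standard Clarkson-Shor/Chazelle-Friedman accounting using the moment bound already available from the previous lemma.
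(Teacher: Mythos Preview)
Your two-level scheme (random sample, then refine heavy cells) is exactly the skeleton the paper uses, and your size and time accounting via the moment bound of Lemma~\ref{lem:vertical_decomposition} is the right idea. But the step where you ``apply Lemma~\ref{lem:vertical_decomposition} to $\C_\nabla$ with sample size $r_\nabla$ and then Las~Vegas re-sample until every sub-cell has conflict list at most $N/r$'' does not work. Lemma~\ref{lem:vertical_decomposition} only controls moments; for a sample of size $r_\nabla$ the vertical decomposition is essentially \emph{never} a $1/r_\nabla$-cutting, because the exponential-decay bound says the expected number of cells with conflict list exceeding $t\cdot k_\nabla/r_\nabla$ is $\Theta(r_\nabla^2)$ for $t=O(1)$. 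So the event you want to hit by re-sampling has probability $o(1)$, not $\Omega(1)$, and the expected number of trials blows up. The subtlety you flag at the end is therefore not a detail but the crux.

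The paper handles the second level differently: inside each heavy cell it takes a sample of size $O(t_\nabla\log t_\nabla)$ and invokes the $\varepsilon$-net theorem (Agarwal--\Matousek), which \emph{does} guarantee with constant probability that the resulting vertical decomposition is a genuine $1/t_\nabla$-cutting of $\C_\nabla$; re-sampling then succeeds in $O(1)$ expected trials. The extra $\log t_\nabla$ factor is why the paper's time analysis bounds $\sum |\C_\nabla|t_\nabla\log t_\nabla \le \sum |\C_\nabla|t_\nabla^2 = (r/N)^2\sum|\C_\nabla|^3$ and uses the $c=3$ moment rather than $c=2$. For the $O(r^2)$ size bound the paper then appeals to de~Berg--Schwarzkopf. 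Your $c=2$ computations are clean and would be fine if the inner refinement were itself a cutting, but as written you need to either switch to the net-based inner step (and redo the time bound with $c=3$) or add further levels of refinement in the style of Chazelle's hierarchical cuttings.
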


\begin{proof}
  The fact that $\Xi$ exists, and has size $O(r^2)$ follows by
  combining the approach of Agarwal and
  \Matousek~\cite{agarwal94range_searc_semial_sets} with the results
  of de Berg and Schwarzkopf~\cite{berg95cuttin}. In particular,
  Agarwal and \Matousek{} show that a $1/r$-cutting of size
  $O(r^2\log^2 r)$ exists (essentially the vertical decomposition of a
  random sample of size $O(r\log r)$ will likely be a
  $1/r$-cutting). Plugging this into the framework of de Berg and
  Schwartzkopf gives us that $1/r$-cutting of size $O(r^2)$ exists. We
  can use the algorithm in Lemma~\ref{lem:vertical_decomposition} to
  construct the vertical decomposition as needed by the Agarwal and
  \Matousek algorithm, as well as to implement the framework of de
  Berg and Schwartzkopf.

  For completeness, we include the full argument here.
  Let \E denote the family of Tarski cells in $\R^2$, and observe that
  $\T \subset \E$. The range space
  $(\E,\{ \C_\Delta \mid \Delta \in \E \})$ has constant
  VC-dimension~\cite{agarwal94range_searc_semial_sets}. Furthermore,
  for some subset $\T' \subset \E$ of size $m$, the vertical
  decomposition $\VD(\T')$ of the arrangement of $\T'$ has size
  $O(m^2)$. Therefore, by Lemma 3.1 of Agarwal and
  \Matousek~\cite{agarwal94range_searc_semial_sets}, the vertical
  decomposition $\VD(\NN)$ of a \emph{$1/r$-net} \NN of \T of size
  $m=O(r\log r)$ is an $1/r$-cutting of \T of size
  $O(m^2)=O(r^2\log^2 r)$. (Recall that \NN is an $1/r$ net for \T if
  for every range $\C_\Delta$ of size at least $|\T|/r$ contains
  at least one region from \NN.)

  We can compute such an $1/r$-net, together with its conflict lists,
  in expected $O(Nm)=O(Nr\log r)$ time. Indeed, taking a random sample
  \NN of size $O(r\log r)$ is expected to be a $1/r$-net for \T with
  constant probability. So, we can simply construct $\VD(\NN)$ and its
  conflict lists using Lemma~\ref{lem:vertical_decomposition}. If \NN
  is not a $1/r$-net (i.e.\ the size of the conflict lists grows
  beyond size $O(Nr\log r)$), we discard the results, and start fresh
  with a new random sample. The expected number of retries is
  constant, and thus the expected time to construct a $1/r$-cutting is
  $O(Nr\log r)$.

  We now use the results of de Berg and Schwarzkopf to obtain a
  cutting of size $O(r^2)$~\cite{berg95cuttin}. Let
  $\RR \subseteq \T$ be an arbitrary subset of regions from \T. The
  vertical decomposition has the role of what de Berg and Schwarzkopf
  call a ``canonical triangulation''. In particular, observe that we
  have the following properties:
  \begin{description}
  \item[C1] Each pseudo-trapezoid $\Delta \in \VD(\RR)$ is defined by
    a constant size defining set $\D \subseteq \RR$. More precisely,
    we need that $\Delta$ is a pseudo-trapezoid in $\VD(\D)$.

  \item[C2] For each pseudo-trapezoid $\Delta \in \VD(\RR)$, the
    pseudo-trapezoid also appears in $\VD(\T)$ if and only if the
    conflict list $\C_\Delta$ with respect to \T is empty.
  \item[C3] For a parameter $t \geq 1$, there exists a $1/t$-cutting
    of $\RR$ of size $O(t^2\log^2 t)=O(t^3)$ (as we argued above).
  \end{description}
  Therefore, we can apply the results of de Berg and
  Schwarzkopf~\cite{berg95cuttin}. In particular, their Lemma 1 gives
  us that exists an $1/r$-cutting of $\T$ of size $O(r^2)$ (since the
  expected number of pseudo-trapezoids in $\VD(\RR)$ in a random
  sample $\RR \subseteq \T$ of size $r$ is $O(r^2)$.

  All that remains is to argue that we can construct such a cutting in
  expected $O(nr)$ time. We follow a similar presentation as
  Har-Peled~\cite{harPeled2011geometric}. We take a random sample $\RR$
  of size $r$, and compute the vertical decomposition $\VD(\RR)$ and its
  conflict lists using Lemma~\ref{lem:vertical_decomposition}. For
  each pseudo-trapezoid $\nabla \in \VD(\RR)$ whose conflict list is
  larger than $CN/r$, for some constant $C$, we compute a
  $1/t_\nabla$-cutting of $|\C_\nabla|$, for
  $t_\nabla = |\C_\nabla|r/N$ and clip it to $\nabla$. We use
  the approach based on $1/t_\nabla$-nets for these cuttings. Hence,
  the expected time to construct all of them is
  \begin{align*}
    \sum_{\nabla \in \VD(\RR)} O(|\C_\nabla|t_\nabla\log t_\nabla)
    &= \sum_{\nabla \in \VD(\RR)} O(|\C_\nabla|t_\nabla^2)
     = \sum_{\nabla \in \VD(\RR)} O(|\C_\nabla|^3r^2/N^2)\\
    &= O(r^2/N^2)\sum_{\nabla \in \VD(\RR)} |\C_\nabla|^3.
  \end{align*}
  By Lemma~\ref{lem:vertical_decomposition} the total expected time is
  therefore $O((r^2/N^2) r^2 (N/r)^3 ) = O(Nr)$. Clipping the
  resulting cells can be done in the same time. By Lemma 1 of Berg and
  Schwarzkopf~\cite{berg95cuttin} the result is a $1/r$-cutting.
\end{proof}

\section{A data structure when $\T_s$ and $\T_t$ are given}
\label{sec:A_data_structure_for_when_Rs_and_Rt_are_given}
In this section, we consider the subproblem of finding the minimum distance over a fixed set of regions. Let $\A, \B \subseteq \T$ be two sets of regions, and let
  $m=\min\{|\A|,|\B|,n\}$. We construct a data structure on the regions in $\A$ and $\B$ such that we can compute the lower envelope $\L^{\A \times \B}(s,t)$ for any query points $s \in \bigcap \A$ and $t \in \bigcap \B$ efficiently. We call such a data structure a \LowerEnvelope data structure.

  Observe that all regions in \A overlap in some point $a$, and all
  regions in \B overlap in some point $b$. So $\A \subseteq \T_a$, and
  $\B \subseteq \T_b$. It then follows from Lemma~\ref{lem:size_rels}
  that $\Rel(\A \times \B)$ has size $O(m)=O(\min
  \{|\A|,|\B|,n\})$. The functions $f_{ST}$, for
  $(S,T) \in \Rel(\A \times \B)$ are four-variate, and have constant
  algebraic degree. 
  The \LowerEnvelope data structure thus has to be constructed on only $O(m)$ four-variate functions. Next, we consider two \LowerEnvelope data structures that can be built on a given set of functions $f_{ST}$.

\begin{lemma}
  \label{lem:visible_vertices}
  For any constant $\eps > 0$, we can construct a \LowerEnvelope data structure of size $O(m^{6+\eps})$ in $O(m^{6+\eps})$ expected time that can answer queries in $O(\log m)$ time.
\end{lemma}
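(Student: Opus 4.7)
The plan is to reduce the construction of a \LowerEnvelope data structure to the point location / vertical ray-shooting problem on an arrangement of surfaces in $\R^5$, and then directly plug in the vertical decomposition result of Koltun~\cite{koltun04almos}.

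First, I would enumerate the relevant pairs: since $\A \subseteq \T_a$ and $\B \subseteq \T_b$ for points $a \in \bigcap \A$ and $b \in \bigcap \B$, Lemma~\ref{lem:size_rels} tells us that $|\Rel(\A \times \B)| = O(m)$. For each relevant pair $(S,T)$, the function $f_{ST}(s,t) = \|sv_S\| + d(v_S,v_T) + \|v_Tt\|$ is a four-variate algebraic function of constant description complexity (the middle term $d(v_S,v_T)$ is a precomputed scalar, while the other two are Euclidean distances in the plane). The graph of each such $f_{ST}$ is a four-dimensional semi-algebraic surface of constant degree in $\R^5$, giving us a collection $\F$ of $O(m)$ such surfaces.

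Next, I would invoke Koltun's theorem: the vertical decomposition of the arrangement $\A(\F)$ has complexity $O(m^{6+\eps})$, can be constructed in expected time $O(m^{6+\eps})$, and admits a point location data structure of the same size with $O(\log m)$ query time. I would precompute, for every cell $\nabla$ of this vertical decomposition, a pointer to the surface attaining the minimum among all surfaces lying above the $(s,t)$-projection of~$\nabla$; since each cell is bounded by only $O(1)$ surfaces, this information propagates cell-by-cell during construction.

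To answer a query $(s,t)$, I would perform a vertical ray-shooting query from $(s,t,-\infty)$ upward: using the point location structure, locate the lowest cell $\nabla$ whose projection contains $(s,t)$, and read off the precomputed minimum surface. Evaluating that one function at $(s,t)$ yields $\L^{\A \times \B}(s,t)$ in $O(\log m)$ time, since the lower envelope value is precisely the $z$-coordinate of the first surface hit by the upward ray. There is no real obstacle here beyond carefully verifying that Koltun's result applies to this family of algebraic surfaces; the constant description complexity of each $f_{ST}$ (inherited from the constant complexity of the apex-to-apex distances and the Euclidean norms) is exactly what his theorem requires.
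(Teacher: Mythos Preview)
Your proposal is correct and follows essentially the same approach as the paper: reduce to $O(m)$ four-variate functions via Lemma~\ref{lem:size_rels}, regard their graphs as surfaces in $\R^5$, and invoke Koltun's $O(m^{2d-4+\eps})=O(m^{6+\eps})$ vertical decomposition together with standard point location to query the lower envelope in $O(\log m)$ time. The paper's proof is terser (it does not spell out the per-cell minimum pointers or the ray-shooting interpretation, citing instead~\cite{chazelle91singl_expon_strat_schem_real} for the point-location step), but the argument is the same.
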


\begin{proof}
One way to obtain efficient queries of $\L(s,t)$ is by
  storing the vertical decomposition of (the graphs of) the functions $f_{ST}$
  for all $(S,T) \in \Rel(\A \times \B)$. 
  Koltun~\cite{koltun04almos}
  shows that the vertical decomposition of $m$ surfaces in $\R^d$, each
  described by an algebraic function of constant degree, has
  complexity $O(m^{2d-4+\eps})$, and can be stored in a data structure
  of size $O(m^{2d-4+\eps})$. This data structure can also be constructed in $O(m^{2d-4+\eps})$ expected time, and we can query the value of the
  lower envelope by a point location query in $O(\log m)$ time~\cite{chazelle91singl_expon_strat_schem_real}.

  Since our functions are
  four-variate, we have $d=5$, and thus we get an $O(m^{6+\eps})$
  size data structure that answers queries in $O(\log m)$ time.
\end{proof}

\begin{lemma}\label{lem:vertical_ray_shooting}
    For any constant $\eps > 0$, we can construct a \LowerEnvelope data structure of size $O(m^{5+\eps})$ in $O(m^{5+\eps})$ expected time that can answer queries in $O(\log^2 m)$ time.
\end{lemma}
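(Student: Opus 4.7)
\textbf{Proof proposal for Lemma~\ref{lem:vertical_ray_shooting}.}
The plan is to reduce a lower envelope query on the set of functions $\{f_{ST} : (S,T) \in \Rel(\A \times \B)\}$ to a vertical ray-shooting query in an arrangement of semialgebraic surfaces in $\R^5$, and then invoke the data structure of Agarwal et al.~\cite{Agarwal21_polynomial_partitioning}. As observed above, each $f_{ST}$ is a four-variate algebraic function of constant degree whose graph is a semialgebraic surface in $\R^5$ of constant description complexity. By Lemma~\ref{lem:size_rels}, there are only $m = O(\min\{|\A|,|\B|,n\})$ such relevant pairs, so we obtain a collection $\F$ of $O(m)$ semialgebraic surfaces in $\R^5$.

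Next I would observe that evaluating $\L^{\A \times \B}(s,t) = \min\{f_{ST}(s,t) : (S,T) \in \Rel(\A \times \B)\}$ at a point $(s,t) \in \R^4$ is equivalent to a vertical ray-shooting query in $\R^5$: starting at $(s,t,-\infty)$ and shooting upward in the fifth coordinate direction, the first surface hit is precisely the graph of the function $f_{ST}$ that realizes the minimum, and the fifth coordinate of the intersection point equals $\L^{\A \times \B}(s,t)$. Thus, any vertical ray-shooting oracle on $\F$ immediately yields a \LowerEnvelope query algorithm with the same asymptotic cost.

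Applying Theorem/construction of Agarwal et al.~\cite{Agarwal21_polynomial_partitioning}, we can preprocess $\F$ into a data structure of size $O(m^{5+\eps})$, built in expected $O(m^{5+\eps})$ time, that answers vertical ray-shooting queries in $O(\log^2 m)$ time. Combining this with the above reduction yields the claimed bounds for the \LowerEnvelope data structure.

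The only subtle point, which I would spell out carefully, is the validity of the reduction from a lower envelope query to a vertical ray-shooting query: we need to make sure that the query point $(s,t)$ lies in the common domain $\bigcap \A \times \bigcap \B$ so that every $f_{ST}$ is well-defined there, which is guaranteed by the hypothesis of the lemma. Beyond this, the proof is essentially a direct citation of~\cite{Agarwal21_polynomial_partitioning} once the reduction is in place; I do not anticipate any nontrivial obstacle.
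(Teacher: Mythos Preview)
Your proposal is correct and follows essentially the same approach as the paper: reduce a lower envelope query to a vertical ray-shooting query among the $O(m)$ graphs of the functions $f_{ST}$ viewed as semialgebraic sets in $\R^5$, and then invoke the $O(m^{d+\eps})$-space, $O(\log^2 m)$-query structure of Agarwal~\etal~\cite{Agarwal21_polynomial_partitioning} with $d=5$. The paper's own proof is just a terse version of exactly this argument.
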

\begin{proof}
An alternative way to query $\L(s,t)$ is to perform a vertical ray-shooting query. Agarwal~\etal~\cite{Agarwal21_polynomial_partitioning} show that a collection of $m$ semialgebraic sets in $\R^d$, each of constant complexity, can be stored in a data structure of size $O(m^{d + \eps})$, for any constant $\eps > 0$, that allows for vertical ray-shooting queries in $O(\log^2 m)$ time. The data structure can also be constructed in $O(m^{d + \eps})$ expected time. As the (graphs of) our functions $f_{ST}$ are semialgebraic sets in $\R^5$, this gives an $O(m^{5 + \eps})$ size data structure that can answer queries in $O(\log ^2 m)$ time.
\end{proof}

\section{A data structure when $\T_s$ is given}
\label{sec:given_Rs}
\newcommand{\zlarge}{Z}
\newcommand{\zsmall}{r}

Let $\A, \B \subseteq \T$ be two subsets of regions with $|\A|=k$,
and $|\B|=M_0$. We develop a data structure
to store $\L^{\A \times \B}$ that can efficiently compute
$\L^{\A \times \B}(s,t)$, provided that $s \in \bigcap \A$. As
before, this implies that $\T_s \supseteq \A$, and thus
$k=O(n)$. Moreover, if $\A=\T_s$ and $\B=\T$ this thus allows us to
compute $d(s,t)$ for any $t \in P$. As there are at most $nk$ relevant regions for $t$, we can assume that $M_0 \leq nk$. We can compute these relevant regions, and store for each region in $\A$ a bidirectional pointer to its relevant region in $\B$, in $O(n^2\log n)$ time by sorting the regions on their respective shortest path map and performing a linear search. We formulate our result with respect to any \LowerEnvelope data structure that uses $\storage(m)$ space and expected preprocessing time on $m$ functions, and has query time $\query(m)$, where $\storage(m)$ is of the form $Cm^a$ for some constants $C >0$ and $a \geq 3$, and $\query(m)$ is non-decreasing.
In this section, we prove the following lemma.

\begin{lemma}
  \label{lem:subproblem_ds}
  For any constant $\eps > 0$, there is a data structure of size
  $O(n^{2+\eps} \storage(k))$, so that for any query points $s,t$ for which
  $s \in \bigcap \A$ we can compute $\L^{\A \times \B}(s,t)$ in
  $O(\log n + \query(k))$ time. Building the data structure takes
  $O(n^{2+\eps} \storage(k))$ expected time.
\end{lemma}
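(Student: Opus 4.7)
The plan is to build a constant-depth cutting tree on $\B$ and attach a \LowerEnvelope data structure to every node, exploiting Lemma~\ref{lem:size_rels} to keep each such structure built on only $O(k)$ functions.  I would fix a branching parameter $r = n^\delta$ with $\delta = \Theta(\eps)$ sufficiently small.  Each node $\nu$ of the tree is associated with a planar cell $\Delta_\nu$ and a set $\B_\nu \subseteq \B$ of regions whose boundary conflicts with $\Delta_\nu$; the root has $\Delta_\mathrm{root} = \R^2$ and $\B_\mathrm{root} = \B$.  At $\nu$ I would build a $1/r$-cutting of $\B_\nu$ using Lemma~\ref{lem:cuttings}; for every cell $\Delta'$ of this cutting I would split $\B_\nu$ into the set $\B^+$ of regions fully containing $\Delta'$ and the conflict list $\B^-$ of size at most $|\B_\nu|/r$, create a child $\nu'$ with $\Delta_{\nu'} = \Delta'$ and $\B_{\nu'} = \B^-$, and store at $\nu'$ a \LowerEnvelope data structure built on $\Rel(\A \times \B^+)$.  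The recursion stops when $|\B_\nu| \leq k$, in which case a leaf stores a \LowerEnvelope on $\Rel(\A \times \B_\nu)$ directly.

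The central step, and the main conceptual obstacle, is to argue that every stored \LowerEnvelope really is built on only $O(k)$ functions, even though $|\Rel(\A \times \B)|$ could be as large as $\Theta(nk)$.  For every node $\nu'$ coming from a cell $\Delta'$, every region in $\A$ contains the fixed point $s$ (so $\A \subseteq \T_s$), while every region in $\B^+$ contains an arbitrarily chosen interior point $p$ of $\Delta'$ (so $\B^+ \subseteq \T_p$).  The hypotheses of Lemma~\ref{lem:size_rels} therefore hold and give $|\Rel(\A \times \B^+)| = O(\min\{|\A|,|\B^+|\}) = O(k)$; the same reasoning bounds the leaf data structures.  Consequently every node carries a \LowerEnvelope of size $\storage(k)$ that supports queries in time $\query(k)$.

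Because $M_0 \leq nk$ we have $r^d \leq M_0/k \leq n$, so the tree has depth $d = O(1/\delta) = O(1/\eps)$ and contains $\sum_{\ell=0}^{d} O(r^{2\ell}) = O(r^{2d}) = O(n^2)$ nodes.  Multiplying by $\storage(k)$ per node gives total space $O(n^2 \storage(k))$, which fits within the claimed $O(n^{2+\eps}\storage(k))$.  The expected preprocessing is dominated by the $O(n^2)$ \LowerEnvelope constructions, contributing $O(n^2 \storage(k))$; the cutting costs from Lemma~\ref{lem:cuttings} sum geometrically to $O(n^{2+\delta} k)$ and are absorbed using $\storage(k) = \Omega(k^3)$ and $\delta < \eps$.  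To answer a query $(s,t)$ I would descend from the root, at each visited node performing planar point location in the stored cutting in $O(\log r) = O(\log n)$ time to find the child cell containing $t$, querying the corresponding child's \LowerEnvelope in $\query(k)$ time, and recursing.  By Lemma~\ref{lem:decompose}, applied to the partition of $\B$ formed by the $\B^+$-sets along the root-to-leaf path together with the final leaf's $\B_\nu$, combining the partial minima returned along the way yields $\L^{\A \times \B}(s,t)$; since the path has only $O(1/\eps)$ nodes, the total query time is $O(\log n + \query(k))$ as claimed.
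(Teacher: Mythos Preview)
Your overall strategy---a cutting tree on $\B$ with a \LowerEnvelope structure at every node, using Lemma~\ref{lem:size_rels} to bound each such structure by $O(k)$ functions---is exactly the paper's. The difference is that you stop the recursion as soon as $|\B_\nu|\le k$ and store a single \LowerEnvelope on $\Rel(\A\times\B_\nu)$ at the leaf, whereas the paper continues all the way down to $|\B_\nu|=O(1)$ and uses Lemma~\ref{lem:recurrence} to bound the space of those ``small'' subtrees. Your simplification is attractive, but it creates a real gap at the leaves.

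At a leaf, $\B_\nu$ is a \emph{conflict list}: its regions have boundaries crossing $\Delta_\nu$, so there is no point $p$ with $\B_\nu\subseteq\T_p$, and in particular the query point $t$ need not lie in $\bigcap\B_\nu$. First, this means Lemma~\ref{lem:size_rels} does not apply as stated (its hypothesis requires $\B\subseteq\T_t$); you can still get the $O(k)$ bound by charging each $T\in\B_\nu$ to the $O(1)$ regions of $\SPM(v)$ in $\A$ containing a fixed $s_0\in\bigcap\A$, but that is not quite ``the same reasoning''. More seriously, the \LowerEnvelope data structure of Section~\ref{sec:A_data_structure_for_when_Rs_and_Rt_are_given} is only specified for queries with $t\in\bigcap\B$. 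If the $f_{ST}$ are treated as total four-variate functions (as when invoking Koltun's vertical decomposition), your leaf query minimizes over pairs $(S,T)$ with $t\notin T$, for which $\|v_Tt\|$ is not the length of a segment inside $P$; the returned value can be strictly smaller than $d(s,t)$. Relatedly, the $\B^+$-sets along the root-to-leaf path together with $\B_\nu$ do \emph{not} partition $\B$---regions disjoint from some $\Delta_\ell$ are silently dropped---so your appeal to Lemma~\ref{lem:decompose} is not justified as written. The paper sidesteps all of this by recursing until only $O(1)$ regions remain (where one tests $t\in T$ explicitly) and paying for the extra levels via Lemma~\ref{lem:recurrence}. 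An alternative fix is to build each \LowerEnvelope on the graphs of the $f_{ST}$ restricted to $S\times T$, so that pairs with $t\notin T$ contribute $+\infty$; but then you must argue that Lemmas~\ref{lem:visible_vertices} and~\ref{lem:vertical_ray_shooting} still apply to such partial surfaces.
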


Note that if we are somehow given $\A=\T_s$, and have $\B=\T$ then
$k=n$, and thus we get an $O(n^{2+\eps}\storage(n))$ size data
structure that can compute $d(s,t)$ in $O(\log n + \query(n))$ time.

Our data structure is essentially a cutting-tree~\cite{Clarkson87_new_applications_random_sampling} in which
each node stores an associated \LowerEnvelope data
structure (Section~\ref{sec:A_data_structure_for_when_Rs_and_Rt_are_given}). In detail, let $r \in \upto{2}{M_0}$ be a parameter to be determined later. We build a
$1/r$-cutting $\Xi'$ of $\B$ using the algorithm from
Lemma~\ref{lem:cuttings}. 
Furthermore, we preprocess $\Xi'$ for point
location
queries~\cite{edelsbrunner86optim_point_locat_monot_subdiv}. For each
cell $\Delta \in \Xi'$, the subset $\B_\Delta$ of regions of $\B$
that contain $\Delta$ have an apex visible from any point within
$\Delta$. (Since all points in a region $T$ see its
apex $v_T$, i.e.\ the line segment to $v_T$ does not intersect $\delta P$, and $\Delta$ is contained in $T$.) Hence, we store a
\LowerEnvelope data structure on the pair of sets
$(\A,\B_\Delta)$ for each cell $\Delta$.
We now recursively process the set $\C_\Delta$ of
 regions whose boundary intersects $\Delta$. 

All that remains is to describe how to choose the parameter $r$ that we use to build the cuttings. Let $c$ be a constant so
that the number of cells in a $1/r$-cutting on $\B$ has at most
$cr^2$ cells. The idea is to pick $r=\max\{n^\delta,2,2c^{1/(a-2)}\}$, for some fixed $\delta \in (0,1)$ to be
specified later.

\subparagraph{Space usage.} Let $M$ be the number of remaining regions from $\B$
(initially $M=M_0$).
Storing the cutting $\Xit$, and its point location structure takes
$O(r^2)$
space~\cite{edelsbrunner86optim_point_locat_monot_subdiv}. Moreover,
for each of the $O(r^2)$ cells, we store a \LowerEnvelope data
structure of size $\storage(\min(k,M))$. There are only $M/r$ regions whose boundary intersects a cell
of the cutting on which we recurse. The space usage of the data structure is thus given by the following recurrence:
\[ \S(M)=
  \begin{cases}
    cr^2\S(M/r) + O\left(r^2 \cdot \storage(\min\{k,M\})\right) & \text{if } M > 1\\
    O(1)                            & \text{if } M = 1.
  \end{cases}
\]
After $i$ levels of recursion, this gives $O(r^{2i}) = O(n^{2\delta i})$ subproblems of size 
at most $M_0/r^i \leq nk/n^{\delta i} = n^{1-\delta i} k$. It follows that at level $\frac{1}{\delta}$, we have $O(n^2)$ subproblems of size $O(k)$. 
Next, we analyze the space usage by the higher levels, where $i \leq \frac{1}{\delta}$, and lower levels, where $i > \frac{1}{\delta}$, separately. 
Note that for the higher levels, we have $\min\{k,M\} = k$, and for the lower levels $\min\{k,M\} = M$. There are only $\frac{1}{\delta} = O(1)$ higher levels.

In the higher levels of our data structure, the $O(r^{2i})$ associated \LowerEnvelope data structures at level $i \leq \frac{1}{\delta}$, take $O(r^{2i} \cdot r^2 \cdot \storage(k)) = O(n^{2 + \eps}\storage(k))$ space, by setting $\delta = \eps/2$. Since there are only $O(1)$ higher levels, the total space usage of these levels is $O(n^{2 + \eps}\storage(k))$ as well.

In the lower levels of our data structure, we are left with $O(n^2)$
``small'' subproblems at level $i = \frac{1}{\delta}$. In the
following we argue that such a ``small'' subproblem uses only
$O(r^2 \storage(k))$ space, and therefore the lower levels also use
only $O(n^{2 + \eps}\storage(k))$ space in total.

For each such ``small'' subproblem, we have a
$1/r$-cutting on $M \leq k$ regions that consists of at most $cr^2$
cells.  Since $M \leq k$, the recurrence simplifies to
\[ \S'(M)=
  \begin{cases}
    cr^2\S'(M/r) + dr^2\storage(M) & \text{if } M > 1\\
    e                            & \text{if } M = 1,
  \end{cases}
\]
where $c,d$, and $e$ are positive constants.

\begin{lemma}
  \label{lem:recurrence}
  Let $c,d,e,\eps > 0$ be constants and $r > \max\{2,2c^{1/(a-2)}\}$. The
  recurrence $\S'(M)$
  solves to $O(r^2\storage(M))$.
\end{lemma}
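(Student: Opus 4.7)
The plan is to unroll the recurrence and show that the work per level forms a geometric series whose ratio is less than $1$ under the hypothesis $r > 2c^{1/(a-2)}$, so the total is dominated by the top level, which is $dr^2\storage(M)$.

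First, I would unroll the recurrence for $i$ levels. At level $i$ the number of subproblems is $(cr^2)^i$ and each subproblem has size $M/r^i$, so the total work contributed by the associated \LowerEnvelope data structures at level $i$ is
\[
  (cr^2)^i \cdot dr^2 \cdot \storage(M/r^i)
  \;=\; dr^2 \cdot c^i r^{2i} \cdot C (M/r^i)^a
  \;=\; dC\, r^2\, M^a \cdot \left( \frac{c}{r^{a-2}} \right)^i,
\]
using $\storage(m)=Cm^a$ with $a\geq 3$. The ratio between consecutive levels is $\rho := c/r^{a-2}$.

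Next, I would verify that $\rho < 1$. Since $r > 2c^{1/(a-2)}$, we have $r^{a-2} > 2^{a-2} c \geq 2c$ (using $a\geq 3$), so $\rho \leq 1/2$. Summing the geometric series over all $i \geq 0$ gives
\[
  \sum_{i \geq 0} dC\, r^2 M^a \rho^i \;\leq\; 2dC\, r^2 M^a \;=\; O(r^2 \storage(M)).
\]

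Finally, I would check the base-case contribution. The recursion has depth at most $\lceil \log_r M \rceil$, and the leaves contribute
\[
  (cr^2)^{\log_r M} \cdot e \;=\; e \cdot M^{2 + \log_r c}.
\]
Because $r > 2c^{1/(a-2)}$ implies $\log_r c < a-2$, we get $2 + \log_r c < a$, hence the leaf cost is $O(M^a) = O(\storage(M))$, which is absorbed by the $O(r^2\storage(M))$ bound from the internal levels. Combining the two bounds yields $\S'(M) = O(r^2 \storage(M))$.

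\textbf{Main obstacle.} The only delicate point is ensuring that the chosen threshold $r > 2c^{1/(a-2)}$ is \emph{exactly} what forces the per-level ratio $c/r^{a-2}$ strictly below $1$ (so the geometric series converges), while simultaneously keeping the leaf contribution sub-leading. Both follow from the same inequality once one uses $a\geq 3$, so the argument is short but relies crucially on this quantitative choice of $r$.
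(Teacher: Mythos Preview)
Your proof is correct. Both your argument and the paper's rely on the same quantitative fact---that $r>2c^{1/(a-2)}$ forces $c/r^{a-2}\le 1/2$---but the presentations differ: the paper proves the bound by a direct induction on $M$, guessing the constant $D=\max\{2d,e\}$ and verifying the inductive step via
\[
  cr^2\cdot Dr^2\storage(M/r)+dr^2\storage(M)\le Dr^2\storage(M),
\]
whereas you unroll the recursion and sum the resulting geometric series explicitly. Your approach has the advantage of making the role of the threshold on $r$ completely transparent (it is precisely what drives the ratio below $1/2$) and of separating the internal-level cost from the leaf cost; the paper's induction is slightly more compact and avoids the separate leaf analysis, at the price of having to posit the final constant up front. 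Either route is standard for recurrences of this shape.
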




\begin{proof}
  We prove by induction on $M$ that $\S'(M) \leq Dr^2\storage(M)$,
  for constant $D \geq \max\{2d,e\}$. The base case $M=1$ is
  trivial, as $D \geq e$. Recall that $\storage(m)$ is of the form $Cm^a$ for some constants $C >0$ and $a \geq 3$. Using the induction hypothesis we then have that
  \begin{align*}
    cr^2\S'(M/r) + dr^2\storage(M)
    \leq
      cr^2\left(Dr^2\storage\left(\frac{M}{r}\right)\right) + dr^2\storage(M)
    =
      cDr^4\storage\left(\frac{M}{r}\right) + dr^2\storage(M).
  \end{align*}

  Using some basic calculus we then find that:
  \begin{align*}
    cDr^4\storage\left(\frac{M}{r}\right) + dr^2\storage(M) &=
    cDr^4C \frac{M^a}{r^a} + dr^2CM^a \\
    &= Cr^2M^a\left(\frac{cD}{r^{a-2}}+d\right) \\
    &\leq Cr^2M^a\left(\frac{cD}{(2c^{1/(a-2)})^{a-2}}+d\right)\\
    &\leq Cr^2M^a\left(\frac{D}{2}+d\right)\\
    &\leq CDr^2M^a = Dr^2\storage(M)
  \end{align*}
Here, we used that $a \geq 3$, $r > 2c^{1/(a-2)}$, and $D \geq 2d$.
  This completes the proof.
\end{proof}

\subparagraph{Analyzing the preprocessing time.}  By
Lemma~\ref{lem:cuttings} we can construct a $1/r$-cutting on $\B$,
together with the conflict lists, in $O(rM)$ expected time. Since the cutting
has size $O(r^2)$, we can also preprocess it in $O(r^2)$ time for
point location
queries~\cite{edelsbrunner86optim_point_locat_monot_subdiv}. Note that we can compute the set of relevant pairs in a subproblem in $O(\min\{k,M\})$ time, as we already computed pointers between each such pair. The expected time
to build each associated structure is thus $\storage(\min\{k,M\})$. For the small
subproblems, the $O(rM)$ term is dominated by the $O(r^2\storage(M))$
time to construct the associated data structures, and thus we obtain
the same recurrence as in the space analysis (albeit with different
constants).

At each of the $O(1)$ higher levels, we spend
$O(r^{2i} \cdot rM) = O(r^{2i + 1} M_0/r^i) = O(n^{2\delta i
  + \delta} nk) = O(n^{2 + \delta}k)$ expected time to construct the cutting,
and $O(r^{2i} r^2 \storage(k)) = O(n^{2+2\delta} \storage(k))$ expected time to
construct the associated data structures. Since the
$O(n^{2 + 2\delta} \storage(k))$ term
dominates, we thus obtain an expected preprocessing time of
$O(n^{2 + 2\delta} \storage(k)) = O(n^{2+\eps}\storage(k))$.

\subparagraph{Querying.} Let $s,t$ be the query points. We use the
point location structure on $\Xi'$ to find the cell $\Delta$ that
contains $t$, and query its associated
\LowerEnvelope data structure to compute
$\L^{\A \times \B_\Delta}(s,t)$. We recursively query the structure
for on the regions whose boundaries intersect $\Delta$. When we have
only $O(1)$ regions left in $\B_\Delta$, we compute
$\L^{\A \times \B_\Delta}(s,t)$ by explicitly evaluating $f_{ST}(s,t)$
for all $O(1)$ pairs of relevant regions (one per region in
$\B_\Delta$), and return the minimum
found. A region that contains $t \in \Delta$ either contains a cell
$\Delta$, or intersects it. Moreover, all regions that contain
$\Delta$ contain $t$. Hence, the sets $\B_\Delta$ over all cells
$\Delta$ considered by the query together form a partition of
$\B_t$. Since we compute $\L^{\A \times \B_\Delta}(s,t)$ for each
such set, it then follows by Lemma~\ref{lem:decompose} that our
algorithm correctly computes $\L^{\A \times \B_t}(s,t)$, provided
that $s \in \bigcap \A$.

Finding the cell $\Delta$ containing $t$ takes $O(\log r)$ time,
whereas querying the \LowerEnvelope data structure
to compute $\L^{\A \times \B_\Delta}(s,t)$ takes
$\query(\min\{k,M\})$ time. Hence, we spend
$O(\log r + \query(k)) = O(\log n + \query(k))$ time at every level of the recursion. Observe
that there are only $O(2/\delta) = O(1)$ levels in the recursion, as $M_0/r^{(2/\delta)} \leq n^2/n^{(\delta \cdot 2/\delta)} = 1$. It follows that the total query time is $O(\log n + Q(k))$ as claimed.

\section{A two-point shortest path data structure}
\label{sec:A_two-point_shortest_path_data_structure}

Let $\T$ be the set of all $N=O(n^2)$ SPM regions, and let
$\A \subseteq \T$ be a subset of $K$ such regions. We develop a data structure to store $\L^{\A \times \T}$, so
that given a pair of query points $s,t$ we can query
$\L^{\A \times \T}(s,t)$ efficiently. Our main idea is to use
a similar approach as in Section~\ref{sec:given_Rs}; i.e.\ we build a
cutting tree that allows us to obtain $\A_s$ as $O(1)$-canonical
subsets, and for each such set $\A'$ we query a
Lemma~\ref{lem:subproblem_ds} data structure to compute
$\L^{\A' \times \T}(s,t)$. When $\A=\T$ using the Lemma~\ref{lem:visible_vertices} \LowerEnvelope gives us an
$O(n^{10+\eps})$ space data structure that can be queried for
$\L^{\T \times \T}(s,t) = d(s,t)$ in $O(\log n)$ time. Alternatively, using the Lemma~\ref{lem:vertical_ray_shooting} \LowerEnvelope data structure, we obtain an $O(n^{9 +\eps})$ space data structure that can answer queries in $O(\log^2 n)$ time.

Let $r \in \upto{2}{N}$ be a parameter. As before, let  $\storage(m)$ denote the space and expected preprocessing time, and $\query(m)$ the query time, of a \LowerEnvelope data structure on $m$ functions, where $\storage(m) = m^a$ for some constant $a \geq 3$ and $\query(m)$ is non-decreasing. We build a
$1/r$-cutting $\Xi$ of $\A$, and preprocess it for point location
queries~\cite{edelsbrunner86optim_point_locat_monot_subdiv}. For each
cell $\Delta \in \Xi$, we consider the set of $k \leq \min\{K,n\}$ regions $\A_\Delta$
fully containing $\Delta$, and build the data structure from
Lemma~\ref{lem:subproblem_ds} on $(\A_\Delta,\T)$ that can be
efficiently queried for $\L^{\A_\Delta \times \T}(s,t)$ when
$s \in \Delta$. This structure takes
$O(n^{2+\eps'}\storage(k))$ space,
for an arbitrarily small $\eps' > 0$, and achieves
$O(\log n + \query(k))$ query time. We recursively process all
regions from $\A$ whose boundaries intersect $\Delta$ (i.e.\ the at
most $K/r$ regions in the conflict list of $\Delta$). Since the
cutting consists of at most $cr^2$ cells, for some constant $c$, the
space usage over all cells of $\Xi$ is $O(r^2n^{2+\eps'}\storage(k))$.

\subparagraph{Space analysis.} We again pick
$r=\max\{n^\delta,2,2c^{1/(a-2)}\}$, for some arbitrarily small $\delta > 0$.

In particular we will choose
$\delta=\eps/4$ and $\eps'=\eps/2$, so that $\eps'+2\delta=\eps$. The
space usage of the data structure then follows the recurrence
\begin{equation*}\label{eq:recursion_s}
      \S_2(K) =
  \begin{cases} cr^2\S_2(K/r) + O\left(r^2n^{2+\eps'}\storage(\min\{K,n\})\right) & \text{if }K >
      1\\
      O(1)                            & \text{if } K = 1.
  \end{cases}
\end{equation*}

As we start with $K = N$ regions, we have $O(r^{2i})=O(n^{2\delta i})$
subproblems after $i$ levels of recursion, each of size at most
$N/r^i=N/n^{\delta i} = O(n^{2-\delta i})$. It follows that after
$\frac{1}{\delta} = O(1)$ levels, we thus have at most
$c'n^{2\delta\frac{1}{\delta}}=c'n^2$, where $c'=c^{\frac{1}{2\delta}}$ is some constant,
``small'' subproblems, each of size $O(n^{2-\delta\frac{1}{\delta}})=O(n)$.

We bound the space used by the higher levels (when $K>n$) and the
lower levels of the recursion separately.
 We start with the higher
levels.
At level $i$, the $O(n^{2\delta i})$ subproblems contribute a total of
$\displaystyle O\left(n^{2\delta i}n^{2\delta}n^{2+\eps'} \cdot \storage(n)\right) = O\left(n^{4+2\delta+\eps'}\storage(n)\right)$
space for $i \leq \frac{1}{\delta}$. Using that $\eps'+2\delta=\eps$, this is thus a total of $O(n^{4+\eps}\storage(n))$
space for level $i \leq \frac{1}{\delta}$. Since we only have $O(1)$ higher levels, their total
space usage is $O(n^{4+\eps}\storage(n))$ as well.

Once we are in the ``small'' case, $K \leq n$, we have $\min\{K,n\} = K$, and we remain in the
``small'' case. So, for $K \leq n$ we actually have the recurrence

 \begin{equation*}\label{eq:recursion_s_small}
 \S'_2(K)=
  \begin{cases}
    cr^2\S'_2(K/r) + dr^2n^{2+\eps'}\storage(K) & \text{if } K > 1\\
    e                             & \text{if } K = 1.
  \end{cases}
\end{equation*}
Which, using a similar analysis as in Lemma~\ref{lem:recurrence} solves
to $O(r^2n^{2+\eps'}\storage(K))$. Using that
$r = \max\{n^{\delta}, 2, 2c^{1/(a-2)}\}$, and that the maximum size of a
``small'' subproblem is only $n$, each such subproblem thus uses only
$O(n^{2+\eps}\storage(n)) $ space. Since we have $O(n^2)$ small subproblems
the total space used for the ``small'' subproblems is $O(n^{4+\eps}\storage(n))$
space.

Hence, it follows the structure uses a total of $O(n^{4+\eps}\storage(n))$ space.

\subparagraph{Preprocessing time analysis.} As in
Section~\ref{sec:given_Rs} the preprocessing time for the ``small''
subproblems follows the same recurrence as the space bound. Hence,
constructing the data structure for each such subproblem takes
$O(n^{2+\eps}\storage(n))$ expected time, and thus $O(n^{4+\eps}\storage(n))$ expected time in total. For the
higher levels: constructing a $1/r$-cutting on a subproblem at
level $i$ takes $O(r(N/r^i))$ expected time. Since there are $O(r^{2i})$
subproblems on level $i$, and we have $i \leq \frac{1}{\delta}$, it thus
follows that we spend $O(r^{2i} \cdot r(N/r^i)) = O(r^{i + 1}N) = O(n^{\delta i + \delta}N) = O(n^{3+\eps})$ expected time per level
to build the cuttings. As in the space analysis, the expected time to
construct the associated data structures of level $i$ is
$O(n^{2+2\delta}n^{2+\eps'}\storage(n)) = O(n^{4+\eps}\storage(n))$, which dominates the time to construct the cuttings. Since the number of higher
levels is constant, it follows the total expected preprocessing time is also
$O(n^{4+\eps}\storage(n))$.

\subparagraph{Query analysis.} 
We query the data structure symmetrically to Section~\ref{sec:given_Rs}, using the point location structure on $\Xi$ to find the cell $\Delta$ that contains $s$. We then query its associated Lemma~\ref{lem:subproblem_ds} data structure to compute $\L^{\A_\Delta \times \T}(s,t)$, and recurse to compute $\L^{(\A_s \setminus \A_\Delta) \times \T}(s,t)$. Note that $\A_s \setminus \A_\Delta$ is indeed a subset of the regions on which we recursively built a cutting, as $s \in \Delta$ and the region boundary intersects $\Delta$. Finally, we return the smallest distance found.

In total there are only $O(1)$ levels,
at each of which we spend $O(\log r) = O(\log n)$ time to find the cell that contains $s$, and then $O(\log n + \query(n))$ time to
query the Lemma~\ref{lem:subproblem_ds} structure. The resulting query time is thus $O(\log n + \query(n))$.

From the \LowerEnvelope data structure, we do not obtain just the length of the path, it also provides us with a vertex $v$ on the shortest path. To compute the actual path, we locate both $s$ and $t$ in the shortest path map of $v$, and then follow the apex pointers to return the path. This takes only $O(\log n + k)$ additional time, where $k$ denotes the length of the path.

\subparagraph{Putting everything together.} As we remarked earlier, we
can construct the set of regions \T in $O(n^2\log n)$ time, and store
them using $O(n^2)$ space. Similarly, constructing a data structure to
test if $s$ and $t$ can directly see each other also takes
$O(n^2\log n)$ time $O(n^2)$ space. We thus obtain the following lemma:
\begin{lemma}
  \label{lem:2pt_shortest_path_query_data_structure_general}
  For any constant $\eps >0$, we can build a data structure using $O(n^{4+\eps}\storage(n))$
  space and expected preprocessing time that can answer two-point shortest path queries in $O(\log n + \query(n))$ time.
\end{lemma}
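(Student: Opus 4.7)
The plan is to build a cutting-tree over the full set $\T$ of $N = O(n^2)$ augmented shortest path map regions in order to locate $s$, and to equip each node with a Lemma~\ref{lem:subproblem_ds} data structure to resolve $t$. Concretely, pick a parameter $r = \max\{n^\delta, 2, 2c^{1/(a-2)}\}$ for some small $\delta > 0$, and build a $1/r$-cutting $\Xi$ of $\T$ via Lemma~\ref{lem:cuttings}, preprocessed for point location. For every cell $\Delta \in \Xi$, let $\A_\Delta \subseteq \T$ be the regions fully containing $\Delta$, so that $\Delta \subseteq \bigcap \A_\Delta$, and store the Lemma~\ref{lem:subproblem_ds} data structure for the pair $(\A_\Delta, \T)$, which can be queried in $O(\log n + \query(n))$ time for $\L^{\A_\Delta \times \T}(s,t)$ whenever $s \in \Delta$. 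Recurse on the at most $N/r$ regions in the conflict list of $\Delta$.

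To answer a query, first test direct visibility with the Pocchiola--Vegter visibility complex in $O(\log n)$ time. Otherwise descend the cutting-tree: at each node, locate the cell $\Delta$ containing $s$, query its associated structure, and recurse into the child of $\Delta$ on the regions whose boundary crosses it. By Observation~\ref{obs:there_is_an_optimal_relevant_pair} together with Lemma~\ref{lem:decompose}, the minimum of the values returned along the root-to-leaf path equals $d(s,t)$, since the visited cells partition $\T_s$ into $O(1)$ canonical subsets. Taking $r = n^\delta$ bounds the depth by $1/\delta = O(1)$, so the total query time is $O(\log n + \query(n))$.

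The main obstacle is the space analysis, since each associated structure costs $O(n^{2+\eps'}\storage(k))$ with $k = |\A_\Delta|$ potentially as large as $n$. I would split the recursion into \emph{large} levels, where the remaining subproblem size $K$ exceeds $n$, and \emph{small} levels where $K \leq n$. There are only $O(1/\delta) = O(1)$ large levels; at level $i$ there are $O(r^{2i})$ nodes, each contributing $O(r^2 n^{2+\eps'}\storage(n))$ space, which sums to $O(r^{2(i+1)} n^{2+\eps'}\storage(n)) = O(n^{4+2\delta+\eps'}\storage(n))$ at $i = 1/\delta$. Once $K \leq n$, the local recurrence mirrors Lemma~\ref{lem:recurrence} with an extra $n^{2+\eps'}$ factor, so each of the $O(n^2)$ small subtrees uses $O(r^2 n^{2+\eps'}\storage(n)) = O(n^{2+\eps}\storage(n))$ space, again totaling $O(n^{4+\eps}\storage(n))$. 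Choosing $\delta = \eps/4$ and $\eps' = \eps/2$ makes both bounds $O(n^{4+\eps}\storage(n))$. The preprocessing satisfies the same recurrence (the $O(rM)$ cutting-construction cost from Lemma~\ref{lem:cuttings} being dominated by the associated-structure cost), and adding the $O(n^2 \log n)$ work to build $\T$ and the visibility complex completes the proof.
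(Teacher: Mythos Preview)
Your proposal is correct and follows essentially the same approach as the paper: a cutting tree on $\T$ to locate $s$, with a Lemma~\ref{lem:subproblem_ds} structure attached to each cell, the same choice of $r$, the same large/small level split in the space analysis with the identical parameter settings $\delta=\eps/4$, $\eps'=\eps/2$, and the same handling of the visibility complex and preprocessing. The only cosmetic difference is that the paper phrases the large-level bound per level (noting each of the $O(1)$ levels contributes $O(n^{4+2\delta+\eps'}\storage(n))$), while you evaluate it at the deepest large level $i=1/\delta$; both arrive at the same bound.
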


By applying this lemma to the \LowerEnvelope data structures of Lemma~\ref{lem:visible_vertices} and~\ref{lem:vertical_ray_shooting} we obtain our main result:
\main*

\subparagraph{Remark.} Note that it is also possible to avoid using a multi-level data structure and use only a single cutting tree for both $s$ and $t$.
In that case, we would build a \LowerEnvelope data structure for every pair of nodes of the cutting tree.
However, we focus on the multi-level data structure here, as we do need this structure for the case where $s$ is restricted to the boundary of $P$.

\section{Space-time trade-off}
\label{sec:Space-_time_trade-off}

We can achieve a trade-off between the space usage and the query time by grouping the polygon vertices. We group the vertices into $\ell$ groups $V_1,\dots,V_\ell$ of size $O(n/\ell)$. We still compute the multi-set of regions $\T$ as before, but then partition the regions into $\ell$ multi-sets $\T_1,\dots,\T_\ell$ where $\T_i$ contains all regions generated by vertices in $V_i$. Note that each of these sets contains $O(n^2/\ell)$ regions. For each of these sets, we build the data structure of  Lemma~\ref{lem:2pt_shortest_path_query_data_structure_general}.
Each data structure is built on $O(n^2/\ell)$ regions, of which there are only $O(n/\ell)$ relevant pairs. In the notation of Section~\ref{sec:A_two-point_shortest_path_data_structure}, we have $N = O(n^2/\ell)$ regions and the space of the data structure of Lemma~\ref{lem:subproblem_ds} is $O(n^{2+\eps'} \storage(\min\{K, n/\ell\}))$. Next, we analyze the space usage for the same parameter choice for $r$ as in Section~\ref{sec:A_two-point_shortest_path_data_structure}.
We consider a subproblem ``small'' whenever $K \leq n/\ell$. 

The analysis of the space usage is similar to the approach in Section~\ref{sec:A_two-point_shortest_path_data_structure}. At level $i$ of the recursion, we have $c^i n^{2\delta i}$ subproblems of size $N/r^i = O(n^{2-\delta i}/\ell)$. After $i = \frac{1}{\delta} = O(1)$ levels we are left with $c^in^{2\delta i}= O(n^2)$ subproblems of size $O(n^{2-\delta i}/\ell) = O(n/\ell)$. The $O(1)$ large levels have a space usage of 
$O(n^{2\delta i} \cdot n^{2\delta} \cdot n^{2+\eps'} \cdot \storage(n/\ell)) = O(n^{4+\eps}\storage(n/\ell))$. For the small subproblems, i.e.\ $K \leq n/\ell$, we have the recurrence $\S_2(K) = cr^{2}\S_2(K/r)+O(r^2 \cdot n^{2+\eps'} \storage(K))$. This again solves to $\S_2(K) = O(r^2 n^{2+\eps'} K^{6+\eps'}) = O(n^{2+\eps}\storage(n/\ell))$. The $O(n^2)$ small subproblems thus use $O(n^{4+\eps}\storage(n/\ell))$ space in total. The total space usage of all $\ell$ data structures is thus $O(\ell n^{4+\eps}\storage(n/\ell))$.

To answer a query, we simply query each of the $\ell$ data structures for a vertex on the shortest path between $s$ and $t$. We then compute the length of these paths in $P$ and return the shortest path. Queries can thus be answered in $O(\ell (\log n + Q(n/\ell))$ time. The following lemma summarizes the result.
\begin{lemma}\label{lem:trade_off_general}
    For any constant $\eps >0$ and integer $1 \leq \ell \leq n$, we can build a data structure using $O(\ell n^{4+\eps}\storage(n/\ell))$
  space and expected preprocessing time that can answer two-point shortest path queries in $O(\ell(\log n + \query (n/\ell)))$ time.
\end{lemma}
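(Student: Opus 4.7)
The idea is to reuse the data structure of Lemma~\ref{lem:2pt_shortest_path_query_data_structure_general} as a black box, but to run it on $\ell$ smaller, independent sub-instances obtained by partitioning the vertices of $P$. First I would partition the $n$ vertices of $P$ into $\ell$ groups $V_1,\dots,V_\ell$ of size $O(n/\ell)$ (in an arbitrary way) and let $\T_i$ be the multi-set of augmented shortest path map regions generated by the vertices in $V_i$, so that $|\T_i|=O(n^2/\ell)$. For each $i$, I would build a data structure of the same flavor as in Section~\ref{sec:A_two-point_shortest_path_data_structure} to answer $\L^{\T_i \times \T_i}$-queries; at query time I would query all $\ell$ structures and return the minimum. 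Correctness is immediate from Observation~\ref{obs:there_is_an_optimal_relevant_pair}: the first vertex $v$ on the optimal path lies in some group $V_i$, and then the relevant pair certifying $d(s,t)$ lives in $\T_i \times \T_i$, so the $i$-th data structure reports $d(s,t)$.

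The next step is to re-examine the analysis of Section~\ref{sec:A_two-point_shortest_path_data_structure} with the parameters adjusted to this setting. Within a single group, the starting number of regions is $N=O(n^2/\ell)$ and, crucially, by Lemma~\ref{lem:size_rels} the number of relevant pairs of regions inside $\T_i$ is only $O(n/\ell)$, since each $|\T_p \cap \T_i| = O(n/\ell)$. Consequently, when we invoke Lemma~\ref{lem:subproblem_ds} at a node of the outer cutting tree with $k$ regions in $\A_\Delta$, the inner data structure uses $O(n^{2+\eps'}\storage(\min\{k,n/\ell\}))$ space, because the lower-envelope structure is built on at most $O(\min\{k,n/\ell\})$ relevant functions. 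So the ``small'' regime now begins when $K \leq n/\ell$ rather than $K \leq n$.

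For the space bound of one of the $\ell$ cutting trees I would set $r=\max\{n^\delta,2,2c^{1/(a-2)}\}$ with $\delta=\eps/4$ and $\eps'=\eps/2$ exactly as before, and split the recursion into higher and lower levels around the threshold $K=n/\ell$. In the $O(1)$ higher levels, level $i$ contributes $O(r^{2i}\cdot r^2\cdot n^{2+\eps'}\storage(n/\ell)) = O(n^{4+\eps}\storage(n/\ell))$ because $r^{2/\delta}=O(n^2)$ and $\eps'+2\delta=\eps$. In the lower levels, once $K\leq n/\ell$ we have $\min\{K,n/\ell\}=K$ and the recurrence reduces to
\[
    \S_2(K) \;\leq\; cr^2\S_2(K/r) + dr^2 n^{2+\eps'}\storage(K),
\]
which, by the same induction as in Lemma~\ref{lem:recurrence}, solves to $O(r^2 n^{2+\eps'}\storage(K))$; each of the $O(n^2)$ small roots therefore contributes $O(n^{2+\eps}\storage(n/\ell))$, for a total of $O(n^{4+\eps}\storage(n/\ell))$. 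Summing over the $\ell$ groups gives the claimed $O(\ell\, n^{4+\eps}\storage(n/\ell))$ space bound, and the preprocessing time follows the same recurrence with the same bound in expectation.

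The query-time bound is essentially free: each of the $\ell$ data structures answers its query in $O(\log n + \query(n/\ell))$ time by the same argument as in Section~\ref{sec:A_two-point_shortest_path_data_structure}, and we then take the minimum in $O(\ell)$ additional time, yielding $O(\ell(\log n+\query(n/\ell)))$. The main obstacle, and the only place where anything beyond bookkeeping is needed, is verifying that shrinking the ``small regime'' threshold from $n$ to $n/\ell$ does not break the inductive space argument of Lemma~\ref{lem:recurrence}; this is fine because that lemma only requires $\storage(m)$ to be of the form $Cm^a$ with $a\geq 3$, and the constraint on $r$ is independent of which threshold we use.
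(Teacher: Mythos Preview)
Your proposal is correct and follows essentially the same approach as the paper: partition the vertices into $\ell$ groups, build the Section~\ref{sec:A_two-point_shortest_path_data_structure} data structure on each $\T_i$ with the small/large threshold shifted from $n$ to $n/\ell$, and re-run the same higher/lower-level space analysis. The paper carries out exactly this argument with the same parameter choices and the same appeal to Lemma~\ref{lem:recurrence}.
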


By applying the lemma to the \LowerEnvelope data structure of Lemma~\ref{lem:vertical_ray_shooting}, we obtain the following result.
\tradeoff*

\section{A data structure for $s$ and/or $t$ on the boundary}
\label{sec:A_data_structure_for_$s$_on_the_boundary}

In this section, we show that we can apply our technique to develop data structures for the restricted case where either one or both of the query points must lie on the boundary of $P$. In case only $s$ is restricted to the boundary, but $t$ can be anywhere in the interior of $P$, we obtain an $O(n^{6+\eps})$ space data structure. In case both $s$ and $t$ are restricted to the boundary, the space usage decreases further to $O(n^{4+\eps})$. This improves a result of Bae and Okamoto~\cite{bae12query}, who presented a data structure using roughly $O(n^{5+\eps})$ space for this problem.

The main idea is to use the same overall approach: we subdivide the
space into (possibly overlapping) regions, and construct a data structure that can report the regions stabbed by $s$ as a few canonical subsets. 
For each such subset, we build a data structure to
find the regions stabbed by $t$ and report them as canonical subsets,
and for each of those canonical subsets, we store the lower envelope of
the distance functions so that we can efficiently query $d(s,t)$.

The two differences are that: (i) now the set of candidate regions for
$s$ and $t$ might differ; for $s$ this is now the set $\I = \{T \cap
\partial P \mid T \in \T\}$ of intervals formed
by the intersection of the \SPM regions with the boundary of $P$; for
$t$ they are the set \T or \I, depending on whether $t$ is restricted to the boundary or not, and (ii) the
functions $f_{ST}(s,t)$ are no longer four-variate. When only $s$ is restricted to $\partial P$ these functions are three-variate, and when both $s$ and $t$ are restricted to $\partial P$ they are even bivariate. The following two lemmas describe the data structures for when the set of regions stabbed by $s$ is given (i.e.\ the data structure from
Section~\ref{sec:given_Rs}) for these two special cases.

\begin{lemma}
  \label{lem:restricted_subproblem_ds}
  Let $\A \subseteq \I$ be a subset of $k$ intervals, and
  $\B \subseteq \T$ a set of $M_0$ regions. For any $\eps > 0$, there is a data
  structure of size $O(n^{2+\eps} k^{3+\eps})$, so that for any query
  points $s,t$ for which $s \in \bigcap \A$ we can compute
  $\L^{\A \times \B}(s,t)$ in $O(\log^2 n)$ time. Building
  the data structure takes $O(n^{2+\eps} 
 k^{3+\eps})$ expected time.
\end{lemma}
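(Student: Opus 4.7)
The proof follows the blueprint of Lemma~\ref{lem:subproblem_ds} almost verbatim, with only the dimensionality of the distance functions changing. Because $s$ is restricted to the one-dimensional boundary $\partial P$ while $t$ ranges over all of $P$, each $f_{ST}(s,t)$ is a trivariate algebraic function of constant degree whose graph is a surface in $\R^4$ rather than in $\R^5$.

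I would set up the same multi-level cutting tree on $\B$: build a $1/r$-cutting with $r = n^\delta$ for a small constant $\delta = O(\eps)$, preprocess it for point location, and for each cell $\Delta$ store the set $\B_\Delta$ of regions of $\B$ that fully contain $\Delta$, together with a \LowerEnvelope data structure on $(\A,\B_\Delta)$; we then recurse on the regions whose boundaries cross $\Delta$. Lemma~\ref{lem:size_rels} bounds the number of relevant pairs in $\A \times \B_\Delta$ by $O(\min\{k,|\B_\Delta|,n\})$ just as before, and the querying procedure (locating $t$ in each cutting along a root-to-leaf path and aggregating via Lemma~\ref{lem:decompose}) carries over without change.

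The substantive change is the \LowerEnvelope structure placed at each cell. For $m$ trivariate algebraic functions of constant degree the projected lower envelope has complexity $O(m^{3+\eps})$ by Sharir's classical near-quadratic bound lifted one dimension. By invoking the polynomial-partitioning point-location framework of Agarwal~\etal~\cite{Agarwal21_polynomial_partitioning} on the resulting semi-algebraic subdivision of $\R^3$, one obtains a \LowerEnvelope data structure with $\storage(m)=O(m^{3+\eps})$ space and expected preprocessing time and $\query(m)=O(\log^2 m)$ query time. Plugging $\storage(k)=k^{3+\eps}$ and $\query(k)=O(\log^2 k)$ into the parametric space and preprocessing-time recurrences from the proof of Lemma~\ref{lem:subproblem_ds} (using the same split into ``higher'' and ``small'' levels, with Lemma~\ref{lem:recurrence} taking care of the small subproblems) yields a total of $O(n^{2+\eps}\,k^{3+\eps})$; the $O(1)$-depth traversal then gives query time $O(\log n + \log^2 k) = O(\log^2 n)$.

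The main obstacle is establishing the \LowerEnvelope space bound $\storage(m)=O(m^{3+\eps})$ in the first place. A direct application of either Lemma~\ref{lem:visible_vertices} or Lemma~\ref{lem:vertical_ray_shooting} to the function graphs sitting in $\R^4$ would only yield $\storage(m)=O(m^{4+\eps})$, hence a total of $O(n^{2+\eps}\,k^{4+\eps})$, which is too large. Circumventing this requires exploiting that we only need point location in the projected envelope in $\R^3$ (whose complexity matches $O(m^{3+\eps})$) rather than vertical ray-shooting against the $4$-dimensional surfaces themselves; once this \LowerEnvelope building block is in place, every other step is a mechanical re-substitution into the analysis of Lemma~\ref{lem:subproblem_ds}.
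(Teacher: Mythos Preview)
Your approach is essentially the paper's: observe that the distance functions become trivariate, plug a \LowerEnvelope structure with $\storage(m)=O(m^{3+\eps})$ and $\query(m)=O(\log^2 m)$ into Lemma~\ref{lem:subproblem_ds}, and read off the bounds. The one place you diverge is in how you obtain that \LowerEnvelope black box. The paper simply cites Agarwal, Aronov, and Sharir~\cite{agarwal97comput_envel_four_dimen_applic}, which shows directly that the lower envelope of $m$ trivariate constant-degree functions can be stored in $O(m^{3+\eps})$ space supporting $O(\log^2 m)$-time queries. Your route through~\cite{Agarwal21_polynomial_partitioning} is shakier: as used in Lemma~\ref{lem:vertical_ray_shooting}, that result takes $m$ semialgebraic sets in $\R^d$ as input and uses $O(m^{d+\eps})$ space, which in $\R^4$ gives only $O(m^{4+\eps})$ (as you note yourself). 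Applying it instead ``on the resulting semi-algebraic subdivision of $\R^3$'' does not match how the theorem is stated---its input is $m$ sets, not a precomputed subdivision of size $O(m^{3+\eps})$---so that step needs a different citation or an extra argument about point location in constant-complexity $3$-dimensional cell complexes. Once the correct \LowerEnvelope ingredient is supplied, everything else in your proof is exactly the paper's.
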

\begin{proof}
The lower
envelope of $m$ partial three-variate functions has complexity only
$O(m^{3+\eps})$, and can actually be stored using $O(m^{3+\eps})$ space
while supporting $O(\log^2 m)$ time
queries~\cite{agarwal97comput_envel_four_dimen_applic}. Applying Lemma~\ref{lem:subproblem_ds} to this \LowerEnvelope data structure proves the lemma.
\end{proof}

\begin{lemma}\label{lem:boundary_t}
    Let $\A \subseteq \I$ be a subset of $k$ intervals, and
  $\B \subseteq \I$ a set of $M_0$ intervals. For any $\eps > 0$, there is a data
  structure of size $O(n^{1+\eps} k^{2+\eps})$, so that for any query
  points $s,t$ for which $s \in \bigcap \A$ we can compute
  $\L^{\A \times \B}(s,t)$ in $O(\log n)$ time. Building
  the data structure takes $O(n^{1+\eps} k^{2+\eps})$ time.
\end{lemma}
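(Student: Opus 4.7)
The plan is to mirror the proof of Lemma~\ref{lem:subproblem_ds}, but in one dimension, since both $\A$ and $\B$ now consist of intervals on $\partial P$ (which we can parametrize by arclength). Concretely, I would build a $1$D $1/r$-cutting tree on $\B$ and, at each cell, store a \LowerEnvelope data structure on the bivariate functions $f_{ST}$ arising from the relevant pairs.

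First, pick $r = n^\delta$ for a small constant $\delta = \delta(\eps)$, and partition $\partial P$ into $O(r)$ arcs so that each arc is crossed by the endpoints of at most $M_0/r$ intervals of $\B$. In one dimension this cutting is produced just by sorting endpoints, and point location in it is binary search. For each cell $\Delta$, let $\B_\Delta \subseteq \B$ denote the intervals that fully contain $\Delta$; recurse on the intervals whose endpoints lie inside $\Delta$. Since $s \in \bigcap \A$ and every $t \in \Delta$ is contained in every interval of $\B_\Delta$, Lemma~\ref{lem:size_rels} yields $|\Rel(\A \times \B_\Delta)| = O(\min\{k, |\B_\Delta|, n\})$. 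For the associated bivariate functions I rely on the classical fact that the lower envelope of $m$ partial bivariate algebraic functions of constant description complexity has combinatorial complexity $O(m^{2+\eps})$ and admits a point-location data structure of size $O(m^{2+\eps})$ answering lower-envelope queries in $O(\log m)$ time; hence I instantiate $\storage(m) = O(m^{2+\eps})$ and $\query(m) = O(\log m)$.

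From here the space and preprocessing analysis carries over from Section~\ref{sec:given_Rs} nearly verbatim, with the single change that the branching factor is $r$ rather than $r^2$. The resulting recurrence is
\[
    \S(M) = cr\,\S(M/r) + O\!\left( r\,\storage(\min\{k,M\}) \right).
\]
Splitting into ``higher'' levels ($M > k$, of which there are $O(1)$) and ``lower'' levels ($M \le k$, producing $O(n)$ subproblems of size $O(k)$), and applying an induction identical to that of Lemma~\ref{lem:recurrence} to the small-case recurrence $\S'(M) = cr\,\S'(M/r) + dr\,\storage(M)$, both branches contribute $O(n^{1+\eps}\storage(k)) = O(n^{1+\eps}k^{2+\eps})$ to the total space, and by the same bookkeeping the same bound holds for the preprocessing time. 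A query descends $O(1)$ levels and spends $O(\log n)$ per level for binary search plus the lower-envelope query, giving the claimed $O(\log n)$ total.

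The step I expect to require the most care is identifying a bivariate \LowerEnvelope structure that gives $O(\log m)$ queries rather than the $O(\log^2 m)$ obtained in higher dimensions by Lemma~\ref{lem:vertical_ray_shooting}; this however follows from standard planar point location applied to the minimization diagram of the lower envelope, whose complexity is $O(m^{2+\eps})$. Everything else is a straightforward one-dimensional specialization of the arguments of Sections~\ref{sec:Cuttings} and~\ref{sec:given_Rs}.
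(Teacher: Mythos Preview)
Your proposal is correct and takes essentially the same approach as the paper: the paper builds a segment tree on $\B$ with fanout $n^\delta$ (the one-dimensional analogue of your $1/r$-cutting tree) and stores at each node the $O(\min\{k,m_\mu\}^{2+\eps'})$-size point-location structure for the bivariate lower envelope, then bounds the total space directly via $\min\{k,m_\mu\}^{2+\eps'}\le k^{1+\eps'}m_\mu$ and $\sum_\mu m_\mu = O(kn^{1+\delta})$, whereas you reuse the higher/lower-level split of Section~\ref{sec:given_Rs}. One small caveat: Lemma~\ref{lem:recurrence} as stated assumes $a\ge 3$, so for $\storage(m)=m^{2+\eps}$ your ``identical'' induction actually needs the (easy) one-dimensional variant with $r^{a-1}$ in place of $r^{a-2}$, which goes through since $a-1>1$.
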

\begin{proof}
The set $\B$ is now a set of intervals instead of a set of regions. Therefore, rather than using $1/r$-cuttings as in
Section~\ref{sec:given_Rs}, we  use
a segment tree $\Tr_1$ with fanout $f=n^\delta$, for some
$\delta \in (0,1/2)$~\cite{Comp_geom_book}. Each leaf node $\mu$ corresponds to
some atomic interval $J_\mu$. For each internal node $\nu$ we define
$J_\nu$ as the union of the $J_\mu$ intervals of its children
$\mu$. Furthermore, for every node $\mu$, let $\I_\mu$ denote the set
of intervals from \I that span $J_\mu$, but do not span the interval
$J_\nu$ of some ancestor $\nu$ of $\mu$. Let $m_\mu$ denote the
number of intervals in $\I_\mu$.

For each node $\mu$ of the tree, we store the lower envelope of the
functions $f_{ST}(s,t)$ with $S \in \A$ and $T \in \I_\mu$. As both
$\A$ and $\B$ are sets of intervals, the functions $f_{ST}(s,t)$ are
in this case only bivariate. For any $\eps'> 0$, we can store the
lower envelope in a data structure that supports $O(\log n)$ time
point location queries using $O(\min\{k,m_\mu\}^{2+\eps'})$
space~\cite{Sharir94lower_envelopes}. The data structure can be build
in $O(\min\{k,m_\mu\}^{2+\eps'})$ time~\cite{agarwal96overl_lower_envel_its_applic}.

Observe that the total size of all
\emph{canonical subsets} $\I_\mu$ is
$\sum_{\mu \in \Tr_1} m_\mu = O(n^{2+\delta})$ (as there are only
$O(2/\delta)=O(1)$ levels in the tree, at each level an interval
$I \in \I$ is stored at most $2f = 2n^\delta$ times). It then
follows that the space used by the data structure is
\begin{equation*}
        \sum_{\mu \in \Tr_1} O\left(\min\{k,m_\mu\}^{2+\eps'}\right) = O\left(k^{1+\eps'}\sum_{\mu \in \Tr_1} m_\mu\right) = O\left(k^{2+\eps'}n^{1+\delta}\right).
\end{equation*}
So, choosing $\eps'= \eps$ and $\delta = \eps$, we achieve a data structure of size $O(n^{1+\eps}k^{2+\eps})$. 
To query the data structure with a pair of
points $s \in \partial P$ and $t \in \partial P$, we find the set of intervals
stabbed by $t$, and report them as a set of $O(1)$ canonical
subsets. In particular, for each node $\mu$ on the search path, which
has height $O(1)$, we query its
associated \LowerEnvelope data structure in $O(\log n)$ time and use $O(\log f)=O(\log n)$ time to find
the right child where to continue the search.
\end{proof}

Next, we describe the two-point shortest path data structure with $s$ restricted to $\partial P$ that uses either of these two lemmas as a substructure. The set \I of candidate regions for $s$ is now simply a set of
$N=O(n^2)$ intervals. Therefore, as in Lemma~\ref{lem:boundary_t}, rather than using $1/r$-cuttings as in
Section~\ref{sec:A_two-point_shortest_path_data_structure}, we  use
a segment tree \Tr with fanout $f=n^\delta$, for some
$\delta \in (0,1/2)$~\cite{Comp_geom_book}. For every node $\mu$, let $\I_\mu$ again denote the set
of intervals that is stored at $\mu$, and let $k_\mu = |\I_\mu|$. For each such a set $\I_\mu$,
we build the data structure of either
Lemma~\ref{lem:restricted_subproblem_ds} or Lemma~\ref{lem:boundary_t}. 
Because each interval is stored at most $2f = 2n^\delta$ times at a level of the tree, and there are only $O(2/\delta) = O(1)$ levels, we have that $\sum_{\mu \in \Tr} k_\mu = O(n^{2+\delta})$.
It then
follows that the space used by the data structure for only $s$ on the boundary (using Lemma~\ref{lem:restricted_subproblem_ds}) is
\begin{equation*}\label{eq:space_boundary}
\begin{split}
   \sum_{\mu \in \Tr} O\left(n^{2+\eps'} \min\{k_\mu,n\}^{3+\eps'}\right)
  &= O \left( n^{2+\eps'}  \sum_{\mu \in \Tr} n^{2 + \eps'} k_\mu      \right)\\ 
  &= O\left( n^{4+2\eps'} \sum_{\mu \in \Tr}k_\mu \right) = O(n^{4+2\eps'}n^{2+\delta}).
  \end{split}
\end{equation*}
And the space used by the data structure for the both query points on the boundary (using Lemma~\ref{lem:boundary_t}) is
\begin{equation*}
    \begin{split}
        \sum_{\mu \in \Tr} O\left(n^{1+\eps'}\min\{k_\mu, n\}^{2+\eps'}\right) &= O\left(n^{1+\eps'}\sum_{\mu \in \Tr} n^{1+\eps'}k_\mu \right) \\
        &= O\left(n^{2+2\eps'}\sum_{\mu \in \Tr}k_\mu \right) = O\left(n^{2+2\eps'}n^{2+\delta}\right).
    \end{split}
\end{equation*}
So, picking $\eps'=\eps/4$ and $\delta = \eps/2$ we obtain a data structure of size $O(n^{6+\eps})$ or $O(n^{4+\eps})$, respectively.

\subparagraph{Querying.} To query the data structure with a pair of
points $s \in \partial P$ and $t \in P$ or $t \in \partial P$, we find the set of intervals
stabbed by $s$, and report them as a set of $O(1)$ canonical
subsets. In particular, for each node $\mu$ on the search path, which
has height $O(1)$, we query its
associated data structure and use $O(\log f)=O(\log n)$ time to find
the right child where to continue the search. We therefore obtain the
following results:

\boundary*

\boundaryst*

\section{Lower Bounds}
\label{sec:lower_bounds}

In this section we show a lower bound for the two point shortest path problem.
To do this, one would typically reduce from a problem, for which a lower bound is already known or conjectured.
In our case, we reduce from Hopcroft's problem~\cite{Jeff2000} for which lower bounds are known.
However, not the exact type of lower bound that we need.
We start by defining Hopcroft's problem.

\begin{figure}[tbp]
    \centering
    \includegraphics{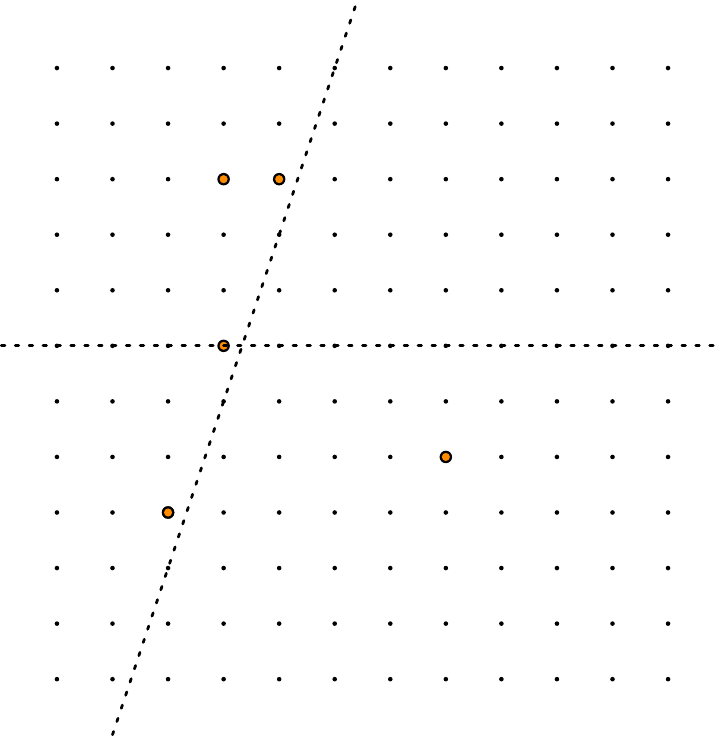}
    \caption{A set of points and a query line. We want to preprocess the points such that we can quickly answer whether the line contains any of the points.}
    \label{fig:line-emptiness}
\end{figure}

In \textit{Hopcroft's problem} we are given a set $R$ of $n$ points in the plane and a set  $\L$ of $n$ lines. 
We then ask whether there exist a line $\ell \in \L$ containing a point from $R$.
See \Cref{fig:line-emptiness} for an illustration.
We further define the integer version of Hopcroft's problem, where each point in $R$ has integer coordinates bounded by $L$. In the same spirit, we ask for every line $\ell = \{(x,y)\in\R^2 : ax+b = y \}$ that the slope and intercept ($a,b$) are integers as well with absolute value bounded by $L$.

\begin{conjecture*}[Integer Hopcroft Lower Bound]
\label{con:HopcroftLowerBound}
    Let $L = n^{O(1)}$ and consider a data structure to solve the integer Hopcroft problem with space complexity $\spacecomplexity$ and query time $\querytime$.
    Then if $\querytime$ is polylogarithmic then $\spacecomplexity = \Omega(n^2)$ on a
    realRAM model of computation.
\end{conjecture*}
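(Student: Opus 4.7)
Since the statement is labelled a conjecture and no proof is presently known in the literature, what I sketch here is a plan of attack rather than a complete proof; the viability of each step points at where the main obstacle lies.

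The natural route is an information-theoretic cell-probe encoding argument. I would first fix a generic set $\mathcal{L}^{\star}$ of $n$ integer lines and a universe $U$ of $\Theta(n^2)$ integer points arranged so that the $n$-bit incidence vector $(\mathbf{1}[\ell \cap R \neq \emptyset])_{\ell \in \mathcal{L}^{\star}}$ varies with $R$ in a highly expressive way, and show that the $n \times n$ answer table induced by varying both sides of an integer Hopcroft instance has $\Omega(n^2)$ bits of information content. A direct-sum argument across $\Theta(\sqrt{n})$ independent sub-instances of size $\Theta(\sqrt{n})$ is appealing here: if each sub-instance forces $\Omega(n)$ information, stacking them forces $\Omega(n^2)$ information overall, even if no single instance achieves that bound by itself. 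A careful use of Szemer\'edi--Trotter-style non-degeneracy ensures that the constructed incidences are genuinely distinguishable and that the forced information content is tight.

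Next, I would convert this information content into a space bound. The integer hypothesis $L = n^{O(1)}$ is crucial: it lets one treat each stored quantity as an $O(\log n)$-bit object, so a query probing $\polylog(n)$ cells and performing $\polylog(n)$ real arithmetic operations reveals only $\polylog(n)$ bits of new information. An encoder/decoder argument in the style of Patrascu and Thorup then yields $\spacecomplexity \cdot \polylog(n) = \Omega(n^2)$, i.e.\ essentially the desired bound up to a polylogarithmic factor. Removing that slack to obtain exactly $\Omega(n^2)$ requires a finer direct-product construction in which the $\polylog$ overhead is amortised, or equivalently a slightly sharper cell-probe lower bound exploiting the fact that incidence queries are Boolean and therefore leak at most one bit of ground truth per query.

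The main obstacle, and precisely the reason the statement remains a conjecture, is the step that bounds the information extracted per query on an unrestricted real-RAM. Without the integer hypothesis a single real cell could in principle encode the entire input, so one must genuinely exploit the bounded-bit nature of the integer inputs to rule out algebraic compression tricks such as storing low-degree polynomials whose evaluations decode many incidences at once. Making this rigorous seems beyond current cell-probe techniques, which typically give only $\Omega(n^{4/3})$-type bounds for point-line incidence problems. A plausible fallback is therefore a fine-grained conditional reduction from a well-studied hard problem such as Orthogonal Vectors or Boolean matrix multiplication: an $o(n^2)$-space polylog-query data structure for integer Hopcroft would yield a subquadratic algorithm for the reduced problem, which would at least explain why the conjectured bound is believed even in the absence of an unconditional proof.
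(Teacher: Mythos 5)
You have correctly recognized that this statement is not a theorem in the paper but an explicit \emph{conjecture}: the authors do not prove it, and instead use it as the hypothesis of their conditional lower bound (\Cref{thm:lowerbound}). There is therefore no ``paper's own proof'' to compare your sketch against; the relevant text in the paper is only the surrounding discussion of why none of Erickson's known Hopcroft lower bounds directly yield what is needed (wrong model, wrong resource, wrong input encoding). Your proposal is consistent with that discussion, and in particular you put your finger on the same core obstacle the authors implicitly identify: on an unrestricted real-RAM, one must somehow bound the information revealed per query cell, and the integrality hypothesis $L = n^{O(1)}$ is the only lever available for doing so. Your remark that current techniques give only $\Omega(n^{4/3})$-type bounds is in the right spirit, though worth being careful about: those bounds for Hopcroft's problem are in algebraic decision-tree or semi-group/partition-graph models (as the paper notes), not cell-probe bounds with a clean information-per-cell accounting, so they are not straightforward starting points for the argument you sketch. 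Likewise your suggested fallback via Orthogonal Vectors or Boolean matrix multiplication would, as stated, give a \emph{time} lower bound for the offline problem, not a space lower bound for a data structure with polylogarithmic queries; translating one into the other is itself a nontrivial step. None of this is a flaw in your submission — you were explicit that you are sketching a plan of attack for an open statement — but since the paper offers no proof either, the honest conclusion is that both you and the authors leave the conjecture unresolved, and your contribution here is a reasonable map of where the difficulty lies.
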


Note that there are various unconditional lower bounds for Hopcroft's problem.
We refer to the article by Erickson~\cite{Jeff2000} for detailed lower
bounds.
While Erickson~\cite{Jeff2000} states many different lower bounds, none of them gives us the type of lower bound that we need. 
The reason is that his lower bounds do not match our setting
in at least one of the following points.
\begin{itemize}
    \item We do not know in which way the coordinates are given. We need the coordinates to be not so large integers.
    \item Some of Erickson's lower bounds are about preprocessing time and we care more about the space complexity.
    \item Some of Erickson's lower bounds are about different algorithmic problems, like line queries or half-plane emptiness queries. 
    We are only concerned with line-emptiness queries.
    \item Erickson's lower bounds are about semi-groups or
    partition graphs and
    we do not know how to compare those to the real RAM model.
\end{itemize}
%
%
%

\lowerbound*




\begin{figure}
    \centering
    \includegraphics{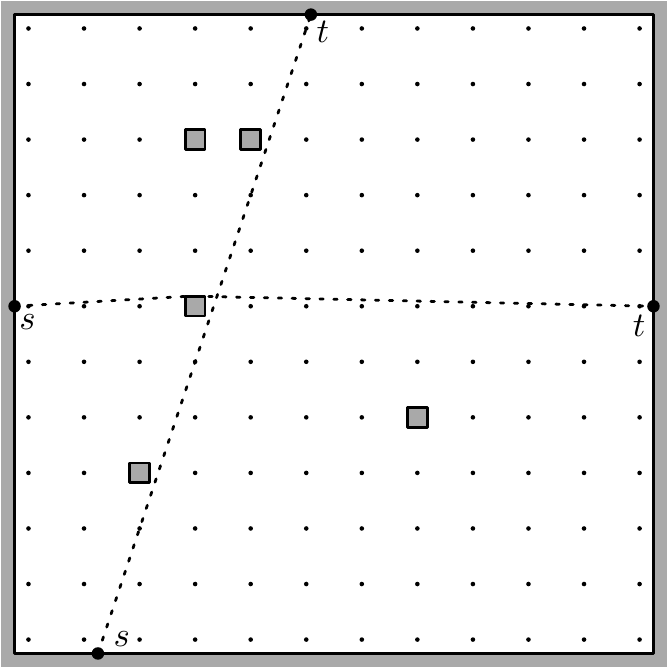}
    \caption{We replace each point by a small square such that the the geodesic distance between $s$ and $t$ becomes larger whenever the line through $s$ and $t$ intersects a point.}
    \label{fig:enter-label}
\end{figure}

To attain this lower bound, we reduce from Hopcroft's problem. The first part of the reduction is a construction of a polygonal domain and queries.
Thereafter, we will show that this will indeed solve Hopcroft's problem.

\subparagraph*{Construction.}
Assume we are given a set $R$ of $n$ points as above, we are now constructing a polygonal domain $P$ such that any line emptiness query on $R$ can be answered using a shortest path query for $P$.
To be precise, $P$ consists of a large boundary square with coordinates $(-L,-L)$, $(L,-L)$, $(-L,L)$, and $(L,L)$. 
Furthermore, we create a small hole for each point $p\in S$. The hole is a square of sidelength $a$ (to be determined later) and center $p$.
For each line $\ell \in \L$, we can find whether or not the line contains a point by a shortest path query between the points $s$ and $t$ where $\ell$ intersects the outer square of $P$.
We compare the shortest distance between $s$ and $t$ to their Euclidean distance.
If the geodesic distance is larger than the Euclidean distance, we say that the line contains a point. 
By performing this query for every line in $\L$, we can conclude that no line contains a point.

\subparagraph*{Correctness.}
We are now ready to show that it solves the online integer version of Hopcroft's  problem.
Consider some line $\ell \in \L$ that contains a point $p \in S$ and the corresponding query points $s,t$.
It is easy to see that the geodesic distance will be different to the Euclidean distance as path from $s$ to $t$ needs to navigate around the square with center $p$.

Now, let's consider the situation that point $p$ is not on line $\ell$. 
Then we know that the distance between $p$ and $\ell$ is at least $L^{-5}$~\cite[Lemma 5]{ArtApprox}.
By setting the side length of all small square to $a = L^{-5}/4$ we can ensure that $\ell$ will not intersect this square.
As we set the side length of every square it follows that $\ell$ will not intersect any square.
And thus the geodesic distance equals the shortest path distance.

\subparagraph*{Conclusion.}
The construction of $P$ allows us to solve the online integer version of Hopcroft's problem with polylogarithmic query time.
Thus, the space complexity of our data structure must be at least $\Omega(n^2)$ according to the Integer Hopcroft Lower Bound conjecture.

\newpage
\bibliography{bibliography.bib}

\end{document}